\journal{Aerospace Science and Technology}
\newtheorem{theorem}{Theorem}
\newtheorem{assumption}{Assumption}
\newtheorem{lemma}{Lemma}
\newtheorem{proposition}{Proposition}
\newtheorem{remark}{Remark}
\newtheorem{definition}{Definition}
\begin{document}

\begin{frontmatter}

\title{Lyapunov-based Nonlinear Model Predictive Control for Attitude Trajectory Tracking of Unmanned Aerial Vehicles}


\author[a]{Duy Nam Bui}
\ead{duynam.robotics@gmail.com}

\author[a]{Thi Thanh Van Nguyen}
\ead{vanntt@vnu.edu.vn}

\author[a,b]{Manh Duong Phung\corref{cor1}}
\ead{duongpm@vnu.edu.vn}
\cortext[cor1]{Corresponding author}

\address[a]{VNU University of Engineering and Technology, Hanoi, Vietnam}
\address[b]{University of Technology Sydney, Australia}

\begin{abstract}

This paper presents a new Lyapunov-based nonlinear model predictive controller (LNMPC) for the attitude control problem of unmanned aerial vehicles (UAVs) which is essential for their functioning operation. The controller is designed based on a quadratic cost function integrating UAV dynamics and system constraints. To ensure closed-loop system stability, an additional constraint is introduced which can be fulfilled via a Lyapunov function derived from a sliding mode controller (SMC). The feasibility and stability of the LNMPC are then proved. Simulation and comparison results show that the proposed controller not only guarantees the system stability but also outperforms other state-of-the-art nonlinear controllers such as the backstepping controller (BSC) and SMC. In addition, the proposed controller can be integrated into an existing UAV model in the Gazebo simulator to perform software-in-the-loop tests. The results show that the LNMPC is better than the built-in PID controller of the UAV which confirms the validity and effectiveness of our proposed approach.
\end{abstract}

\begin{keyword}
Unmanned aerial vehicle, attitude control, trajectory tracking, model predictive control
\end{keyword}

\end{frontmatter}

\linenumbers

\section{Introduction}
In recent years, unmanned aerial vehicles (UAVs) have been receiving significant interest due to their applicability in various fields from transportation and agriculture to military and space exploration \cite{8782102, LIU2020105671,PHUNG201725}. The key to the success of UAV applications lies in control techniques that drive the UAV to reach its desired trajectory and maintain its stability in conditions subject to disturbances such as load variation or wind gut. For quadrotor UAVs, it is essential to control their attitude since it decides their maneuver, i.e., the UAV changes its position and orientation by changing its attitude. Since the dynamic model of quadrotor UAVs is nonlinear with six degrees of freedom but having only four independent inputs, it is an underactuated system \cite{8756125}. Besides, system constraints are often required to meet the physical limits of the UAV. Therefore, designing controllers for quadrotor UAVs is a challenging problem that requires sufficient investigation.

In the literature, linear control methods such as the proportional-integral-derivative (PID) and linear-quadratic-regulator (LQR) controllers are among the most popular methods used for UAVs \cite{NAJM20191087, MAHMOODABADI2020105598, SIRELKHATEM20226275}. In \cite{8667170}, the PID, LQR, and state feedback controllers are used to control the attitude of a UAV with performance sufficient for several tasks. Combining linear controllers with an adaptive method is another approach to improve the tracking performance \cite{8724821, s19010024}. In \cite{8724821}, a fuzzy approach is used to automatically adjust the control parameters of a PID controller. In \cite{s19010024}, the fuzzy PID method is combined with an iterative learning controller to deal with under-actuated dynamics and strong coupling characteristics of the quadrotor. While these approaches can enhance the performance of standard PID controllers, they require linearization, which affects the control performance when the UAV operates in its nonlinear regions.

To overcome that problem, nonlinear control methods are often used. The most popular nonlinear control techniques include sliding mode control (SMC) and its variations due to the capability to handle model uncertainties \cite{8287250, ALIPOUR201916}. In \cite{8287250}, an adaptive twisting SMC algorithm is developed to control the attitudes of a quadrotor with satisfactory performance. In \cite{9373410}, an SMC based on neural networks is designed to minimize the impact of external disturbances on UAV dynamics. The SMC and its variants are also used in \cite{YANG2016208, GONG2019105444, 9410438} to regulate the UAV attitude under different conditions. However, the inevitable chattering phenomenon generated by SMC is undesirable to the system and the computational cost is proportional to the order of the system. Hence, another nonlinear control method named the backstepping control (BSC) is often used \cite{das2009backstepping, 8494719, glida2020optimal}. It can handle nonlinearities in the system dynamics and external disturbances to generate fast and efficient tracking performance. For instance, the adaptive BSC introduced in \cite{FU2018593} is capable of controlling the attitude of a quadrotor in harsh conditions with desirable accuracy. However, if the system model is subject to constraints or uncertainties, the control performance is quickly degraded and may lead to divergence. The BSC and SMC can be combined to take advantage of both controllers \cite{TUAN2019297, 7448915}, but the undesirable chattering phenomenon then persists. Besides, the system stability is affected if constraints on system states and control signals exist.

Recently, model predictive control (MPC) has been used for UAV control to predict system states and handle constraints. In \cite{Yang2013}, an adaptive nonlinear MPC is introduced for path tracking of a fixed-wing UAV. The prediction horizon varies according to the path curvature to enhance the tracking result. In \cite{cavanini2021model}, MPC is used for the autopilot of a UAV in which a linear parameter-varying model is employed to describe its dynamics and account for estimation errors. In another work, a low complexity MPC algorithm is introduced for real-time trajectory tracking of the UAV with limited computation capacity \cite{BANGURA201411773}. MPC is also used in \cite{6965772, 8375687, 7272877, XU2020105686, doi:10.2514/6.2017-1512} for different UAV control tasks such as guidance, linear tracking, and multitarget-multisensor tracking. However, the stability of those controllers is hardly addressed due to the lack of a direct relationship between control signals and input variables. Since stability is an essential characteristic of control systems, analyzing it is essential for the safe operation of UAVs.

In this work, we address the attitude control problem of UAVs by proposing a nonlinear model predictive controller. First, the dynamic model of the quadcopter UAV is introduced. A cost function relating the current and desired state and having constraints on system states and control signals is then defined. A Newton-type algorithm is finally used to solve the cost function to obtain optimal control signals. Here, our contributions are threefold:

\begin{enumerate}
    \item[(i)] We propose a new Lyapunov-based nonlinear model predictive controller (LNMPC) considering constraints on both attitude states and control signals. Those constraints are essential for the practical use of the controller since real UAVs are limited in their maneuverability and actuator power.
    \item[(ii)] We introduce a contraction condition to constrain the system states and based on it, we prove that the stability of the control system is guaranteed. To the best of our knowledge, this is the first time the stability of a non-linear MPC has been proven for attitude control of the UAV.
    \item[(iii)] A number of simulations, comparisons, and software-in-the-loop (SIL) tests have been conducted to evaluate the performance of the proposed controller. The results show that our controller outperforms not only the built-in controller of the UAV but also state-of-the-art nonlinear controllers.
\end{enumerate}

The rest of this paper is structured as follows. Section \ref{dynamics} presents the dynamic model of the quadrotor UAV. Section \ref{LNMPC} describes the design of the LNMPC. Section \ref{stability} analyzes the system stability. Finally, results and conclusions are presented in Section \ref{results} and Section \ref{SecConclusion}.

\section{The dynamic model of quadrotor UAVs}
\label{dynamics}

The UAV used in this work is a quadrotor drone that has two pairs of propellers rotating in opposite directions. By defining the inertial frame $O=\{X,Y,Z\}$ and the body frame $B=\{B_x,B_y,B_z\}$ as in Figure \ref{fig:vehicle}, a UAV configuration includes its position, $(x,y,z)$, and orientation, $\left(\phi,\theta,\psi\right)$, about x, y, and z axes of the inertial frame, respectively.
\begin{figure}
    \centering
    \includegraphics[width=0.5\textwidth]{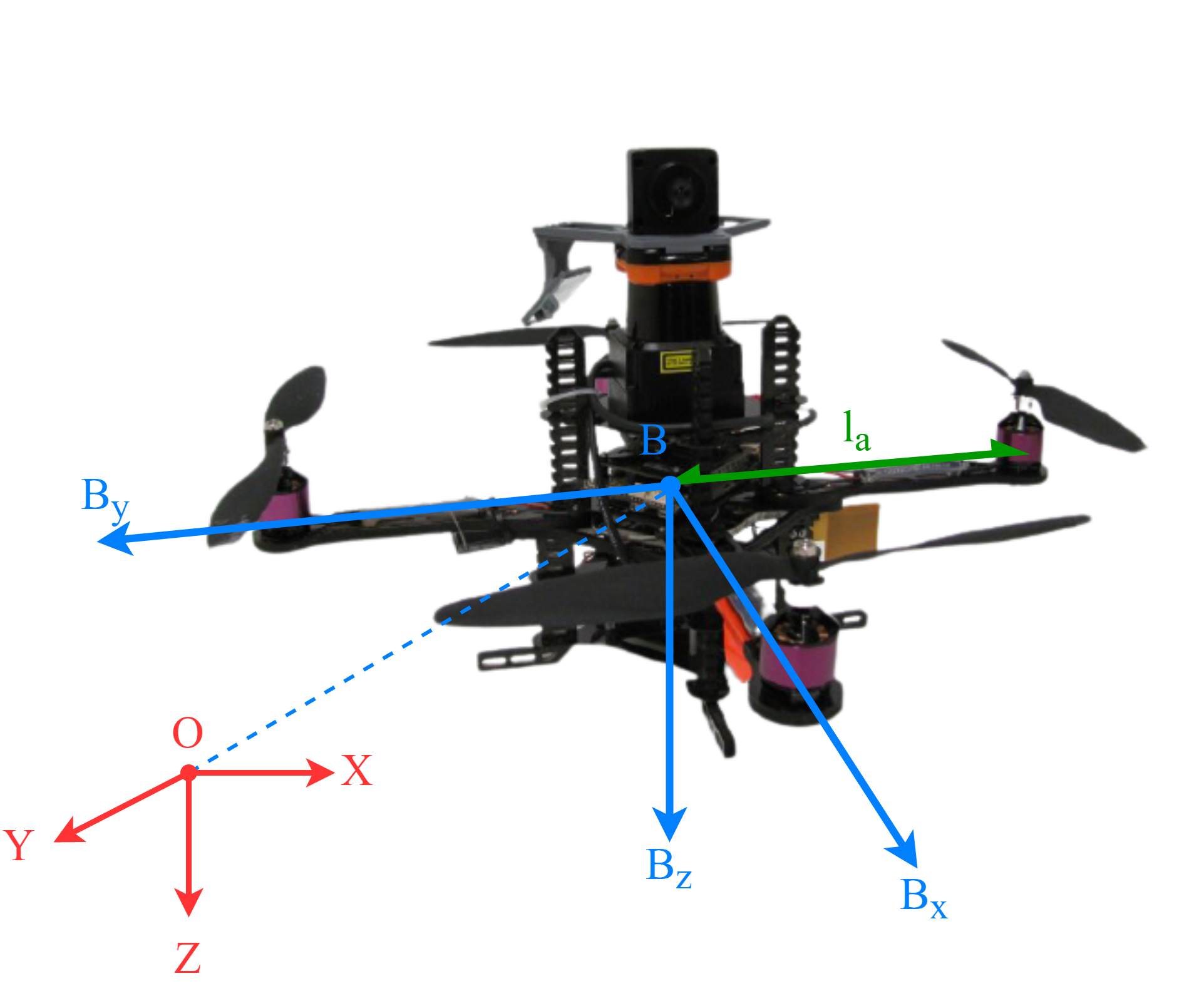}
    \caption{The UAV model and coordinate systems}
    \label{fig:vehicle}
\end{figure}

According to \cite{Wang2016}, the dynamic equations of the UAV can be summarized as
\begin{equation} 
    \left[\begin{array}{c}
    \ddot{x}\\
    \ddot{y}\\
    \ddot{z}\\
    \ddot{\phi}\\
    \ddot{\theta}\\
    \ddot{\psi}
    \end{array}\right]=\left[\begin{array}{c}
    \left(\cos\phi\sin\theta\cos\psi+\sin\phi\sin\psi\right)\dfrac{f_{t}}{m}\\[1.5ex]
    \left(\cos\phi\sin\theta\sin\psi-\sin\phi\cos\psi\right)\dfrac{f_{t}}{m}\\[1.5ex]
    g-\cos\phi\cos\theta\dfrac{f_{t}}{m}\\[1.5ex]
    \dot{\theta}\dot{\psi}\dfrac{I_{y}-I_{z}}{I_{x}}+\dfrac{l_{a}}{I_{x}}\tau_{\phi}\\[1.5ex]
    \dot{\theta}\dot{\psi}\dfrac{I_{z}-I_{x}}{I_{y}}+\dfrac{l_{a}}{I_{y}}\tau_{\theta}\\[1.5ex]
    \dot{\phi}\dot{\theta}\dfrac{I_{x}-I_{y}}{I_{z}}+\dfrac{1}{I_{z}}\tau_{\psi}
    \end{array}\right],
    \label{eq:dynamics}
\end{equation}
where $\boldsymbol{U} = \left[f_t,\tau_\phi,\tau_\theta,\tau_\psi\right]^T$ is the control input with $f_t$ being the total thrust, and $\tau_\phi$, $\tau_\theta$ and $\tau_\psi$ being the moments about x, y, and z axes, respectively; $m$ is the mass of the UAV; $g=9.8$ m/s$^2$ is the gravity; $I_x$, $I_y$, $I_z$ are respectively the moments of inertia about x, y, and z axes of the body frame; and $l_a$ is the UAV arm length.

Since this work focuses on attitude control, only three last equations of (\ref{eq:dynamics}) are considered. Therefore, the dynamic equations representing UAV's attitude are expressed as
\begin{equation}
    \left[\begin{array}{c}
\ddot{\phi}\\
\ddot{\theta}\\
\ddot{\psi}
\end{array}\right]=\left[\begin{array}{c}
\dot{\theta}\dot{\psi}\dfrac{I_{y}-I_{z}}{I_{x}}+\dfrac{l_{a}}{I_{x}}\tau_{\phi}\\[1.5ex]
\dot{\theta}\dot{\psi}\dfrac{I_{z}-I_{x}}{I_{y}}+\dfrac{l_{a}}{I_{y}}\tau_{\theta}\\[1.5ex]
\dot{\phi}\dot{\theta}\dfrac{I_{x}-I_{y}}{I_{z}}+\dfrac{1}{I_{z}}\tau_{\psi}
\end{array}\right].
\label{eqn:attitude}
\end{equation}


\section{Controller design}
\label{LNMPC}
Similar to MPC, LNMPC can handle system constraints to output optimal control signals. Moreover, the stability of LNMPC can be analyzed via the Lyapunov theory. In this section, we present our design of the LNMPC together with the design of a sliding mode controller used to prove the system's stability.

\subsection{Design of the LNMPC}
Let $\boldsymbol{\xi}_{1}=\left[\phi,\theta,\psi\right]^{T}$ and $\boldsymbol{\xi}_{2}=\left[\dot{\phi},\dot{\theta},\dot{\psi}\right]^{T}$ represent the attitude and angular velocities of the UAV, respectively. According to (\ref{eqn:attitude}), we have
\begin{equation}
\begin{aligned}
    \dot{\boldsymbol{\xi}}_{1}&=\boldsymbol{\xi}_{2},\\
    \dot{\boldsymbol{\xi}}_{2}&=\left[\begin{array}{c}
    \ddot{\phi}\\
    \ddot{\theta}\\
    \ddot{\psi}
    \end{array}\right]=\left[\begin{array}{c}
    \dot{\theta}\dot{\psi}\dfrac{I_{y}-I_{z}}{I_{x}}+\dfrac{l_{a}}{I_{x}}\tau_{\phi}\\
    \dot{\phi}\dot{\psi}\dfrac{I_{z}-I_{x}}{I_{y}}+\dfrac{l_{a}}{I_{y}}\tau_{\theta}\\
    \dot{\phi}\dot{\theta}\dfrac{I_{x}-I_{y}}{I_{z}}+\dfrac{1}{I_{z}}\tau_{\psi}
    \end{array}\right]\\
    &=\left[\begin{array}{ccc}
    \dfrac{I_{y}-I_{z}}{I_{x}} & 0 & 0\\
    0 & \dfrac{I_{z}-I_{x}}{I_{y}} & 0\\
    0 & 0 & \dfrac{I_{x}-I_{y}}{I_{z}}
    \end{array}\right]\left[\begin{array}{c}
    \dot{\theta}\dot{\psi}\\
    \dot{\phi}\dot{\psi}\\
    \dot{\phi}\dot{\theta}
    \end{array}\right]+\left[\begin{array}{ccc}
    \dfrac{l_{a}}{I_{x}} & 0 & 0\\
    0 & \dfrac{l_{a}}{I_{y}} & 0\\
    0 & 0 & \dfrac{1}{I_{z}}
    \end{array}\right]\left[\begin{array}{c}
    \tau_{\phi}\\
    \tau_{\theta}\\
    \tau_{\psi}
    \end{array}\right]\\
    &=\boldsymbol{G}_{1}g\left(\boldsymbol{\xi}_{2}\right)+\boldsymbol{G}_{2}\boldsymbol{U},
\end{aligned}
\label{eqn:at_dynamic_2}
\end{equation}
where $g\left(\boldsymbol{\xi}_{2}\right)=\left[\dot{\theta}\dot{\psi},\dot{\phi}\dot{\psi},\dot{\phi}\dot{\theta}\right]^{T}$.

Let $\boldsymbol{\xi}=\left[\boldsymbol{\xi}_{1},\boldsymbol{\xi}_{2}\right]^T$ be the state vector and  $\boldsymbol{U}=\left[\tau_{\phi},\tau_{\theta},\tau_{\psi}\right]^{T}$ be the control signal. The dynamic equations in (\ref{eqn:at_dynamic_2}) can be written in a short form as
\begin{equation}
    \dot{\boldsymbol{\xi}}=f\left(\boldsymbol{\xi},\boldsymbol{U}\right).
\label{eqn:sys}
\end{equation}

Constraints on system states and control signals can be included as 
\begin{equation}
\begin{aligned}
    &\left\vert\boldsymbol{\xi}\right\vert\leq\boldsymbol{\xi}_{max},\\
    &\left\vert\boldsymbol{U}\right\vert\leq\boldsymbol{U}_{max}.
\end{aligned}
\end{equation}

Let $\boldsymbol{\xi}_{1d}=\left[\phi_d,\theta_d,\psi_d\right]^{T}$ be the reference attitude and $\boldsymbol{\xi}_{2d}=\dot{\boldsymbol{\xi}}_{1d}$. The tracking error is given by
\begin{equation}
\boldsymbol{Z}=\left[\begin{array}{c}
    \boldsymbol{z}_{1}\\
    \boldsymbol{z}_{2}
    \end{array}\right]
=\left[\begin{array}{c}
    \boldsymbol{\xi}_{1}-\boldsymbol{\xi}_{1d}\\
    \boldsymbol{\xi}_{2}-\boldsymbol{\xi}_{2d}
    \end{array}\right].
    \label{eqn:error_vec}
\end{equation}

Denoting the reference state as $   \boldsymbol{\xi}_{d}=\left[\boldsymbol{\xi}_{1d},\boldsymbol{\xi}_{2d}\right]^T$, the aim of the LNMPC is to find optimal control signal $\boldsymbol{U}$  to drive the state variable $\boldsymbol{\xi}$ to the desired state $\boldsymbol{\xi}_{d}$. This can be carried out by optimizing a cost function with constraints of the form 
\begin{equation}
    J=\underset{\boldsymbol{U}}{\min}\left\Vert \boldsymbol{\xi}\left(t_k+T\right)-\boldsymbol{\xi}_{d}\left(t_k+T\right)\right\Vert _{\boldsymbol{P}}^{2}+\int_{t_k}^{t_k+T}\left(\left\Vert \boldsymbol{\xi}\left(t\right)-\boldsymbol{\xi}_{d}\left(t\right)\right\Vert _{\boldsymbol{Q}}^{2}+\left\Vert \boldsymbol{U}\left(t\right)\right\Vert _{\boldsymbol{R}}^{2}\right)dt
\label{eqn:lnmpc}
\end{equation}
subject to
\begin{subequations}
    \begin{equation}
        \boldsymbol{\xi}\left(t_k\right)=\boldsymbol{\xi}\left(t=t_k\right)
    \end{equation}
    \begin{equation}
        \dot{\boldsymbol{\xi}}\left(t\right)=f\left(\boldsymbol{\xi}\left(t\right),\boldsymbol{U}\left(t\right)\right)\forall \ t \in  [t_{k};t_{k}+T]
    \end{equation}
    \begin{equation}
        \left\vert\boldsymbol{\xi}\left(t\right)\right\vert\leq\boldsymbol{\xi}_{max}\forall \ t \in  [t_{k};t_{k}+T]
    \end{equation}
    \begin{equation}
        \left\vert\boldsymbol{U}\left(t\right)\right\vert\leq\boldsymbol{U}_{max}\forall \ t \in  [t_{k};t_{k}+T]
    \end{equation}
    \begin{equation}
        \dfrac{\partial V\left(\boldsymbol{\xi}\right)}{\partial\boldsymbol{\xi}}f\left(\boldsymbol{\xi}\left(t_k\right),\boldsymbol{U}\left(t_k\right)\right)\leq\dfrac{\partial V\left(\boldsymbol{\xi}\right)}{\partial\boldsymbol{\xi}}f\left(\boldsymbol{\xi}\left(t_k\right),h\left(\boldsymbol{\xi}\left(t_k\right)\right)\right),
        \label{eqn:contraction}
    \end{equation}
\label{eqn:lnmpc1}
\end{subequations}
where $t_k$ is the current time instant, $T$ is the prediction horizon length, $\boldsymbol{U}\left(t\right)$ is the prediction control signal at future time $t$, $\left\Vert\cdot\right\Vert$ denotes the Euclidean norm, $\boldsymbol{P}$, $\boldsymbol{Q}$, and $\boldsymbol{R}$ are the positive definite weight matrices, $h\left(\cdot\right)$ is a Lyapunov-based closed-loop control law, and $V\left(\cdot\right)$ is a Lyapunov function. 


The LNMPC problem (\ref{eqn:lnmpc} - \ref{eqn:lnmpc1}) can be solved by using a Newton-type algorithm called sequential quadratic programming (SQP) to find a locally optimal solution \cite{quirynen2015autogenerating}. Here, we add an additional contraction constraint (\ref{eqn:contraction}) to the original nonlinear MPC to maintain the closed-loop stability. Constraint (\ref{eqn:contraction}) guarantees that values of the first order derivative of Lyapunov function $V\left(\cdot\right)$ with control input $\boldsymbol{U}(t_k)$  are smaller than or equal to those values obtained by using the Lyapunov-based control law $h\left(\boldsymbol{\xi}(t_k)\right)$. This contraction constraint thus allows proving that the LNMPC inherits the stability property of the controller having $h\left(\boldsymbol{\xi}\right)$. According to \cite{Christofides2011}, any Lyapunov-based closed-loop controller can be used to implement $h\left(\boldsymbol{\xi}\right)$. In this work, we use a sliding mode controller for that purpose with its details described below.

\subsection{Sliding mode control design for the contraction constraint}
This section presents our design of $h\left(\boldsymbol{\xi}\right)$ in (\ref{eqn:contraction}) using the sliding mode control technique. Let us consider the sliding surface of the control system as follow:
\begin{equation}
    s=\dot{\boldsymbol{z}}_1+\lambda\boldsymbol{z}_1=\boldsymbol{z}_2+\lambda\boldsymbol{z}_1,
    \label{eqn:sliding_surface_1}
\end{equation}
where $\lambda$ is a positive gain matrix.

Taking the first derivative of (\ref{eqn:sliding_surface_1}), we have
\begin{equation}
    \dot{s}=\dot{\boldsymbol{z}}_2+\lambda\dot{\boldsymbol{z}}_1=\left(\dot{\boldsymbol{\xi}}_{2}-\dot{\boldsymbol{\xi}}_{2d}\right)+\lambda\boldsymbol{z}_2.
\end{equation}

Consider the SMC Lyapunov candidate function
\begin{equation}
    V_\text{SMC}=\dfrac{1}{2}s^Ts.
    \label{eqn:V}
\end{equation}

Differentiating $V_\text{SMC}$ in (\ref{eqn:V}) and noting (\ref{eqn:at_dynamic_2}), we get
\begin{equation}
    \dot{V}_\text{SMC}=s^T\dot{s}=s^T\left(\boldsymbol{G}_1g\left(\boldsymbol{\xi}_2\right)+\boldsymbol{G}_2\boldsymbol{U}-\dot{\boldsymbol{\xi}}_{2d}+\lambda\boldsymbol{z}_2\right).
    \label{eqn:dotV}
\end{equation}

To ensure $\dot{V}_\text{SMC}$ negative, we choose the control signal of the form 
\begin{equation}
    \boldsymbol{U}_\text{SMC}=\boldsymbol{U}_{eq}+\boldsymbol{U}_{sw},
    \label{eqn:U}
\end{equation}
where $\boldsymbol{U}_{eq}$ is the equivalent control signal that maintains the system state on the sliding manifold and $\boldsymbol{U}_{sw}$ is the signal that leads the system to the sliding surface. They are chosen as follows:
\begin{equation}
\begin{aligned}
    &\boldsymbol{U}_{eq}=\boldsymbol{G}_{2}^{-1}\left(\dot{\boldsymbol{\xi}}_{2d}-\lambda\boldsymbol{z}_{2}-\boldsymbol{G}_{1}g\left(\boldsymbol{\xi}_{2}\right)\right)\\
    &\boldsymbol{U}_{sw}=-\boldsymbol{G}_{2}^{-1}\left(c_{1}\text{sign}\left(s\right)+c_{2}s\right),
\end{aligned}
\label{eqn:U_detail}
\end{equation}
where $c_1$ and $c_2$ are positive diagonal matrices. Substituting (\ref{eqn:U}) and (\ref{eqn:U_detail}) into (\ref{eqn:dotV}) gives
\begin{equation}
    \dot{V}_\text{SMC}=-s^{T}\left(c_{1}\text{sign}\left(s\right)+c_{2}s\right)\leq0.
    \label{eqn:negativeV}
\end{equation}

Thus, the Lyapunov stability of the SMC is satisfied. From (\ref{eqn:negativeV}), the detailed expression of the contraction constraint (\ref{eqn:contraction}) for the LNMPC corresponding to (\ref{eqn:dotV}) becomes 
\begin{equation}
\begin{aligned}
    \dot{V}_\text{LNMPC}\left(t_k\right)&=s\left(t_k\right)^{T}\left(\boldsymbol{G}_{1}g\left(\boldsymbol{\xi}_{2}\left(t_k\right)\right)+\boldsymbol{G}_{2}\boldsymbol{U}\left(t_k\right)-\dot{\boldsymbol{\xi}}_{2d}\left(t_k\right)+\lambda\boldsymbol{z}_{2}\left(t_k\right)\right)\\
    &\leq-s\left(t_k\right)^{T}\left(c_{1}\text{sign}\left(s\left(t_k\right)\right)+c_{2}s\left(t_k\right)\right).
\end{aligned}
\end{equation}

\section{Stability analysis}
\label{stability}
In this section, we analyze the recursive feasibility and closed-loop stability of the LNMPC. First, we consider the following definition and lemma:
\begin{definition}
If $x\in\mathbb{R}^{n}$, the maximum norm of $x$ is
\begin{equation}
    \left\Vert x\right\Vert _{\infty}=\max\left\{ \left\vert x_{k}\right\vert:1\leq k\leq n\right\}.
\end{equation}

For $1\leq p<\infty$, the p-norm of $x$ is 
\begin{equation}
    \left\Vert x\right\Vert _{p}=\left(\sum_{k=1}^{n}\left\vert x_{k}\right\vert^{p}\right)^{1/p}.
\end{equation}
\label{def:1}
\end{definition}

\begin{lemma}
For any square matrices $\boldsymbol{M},\boldsymbol{N}\in\mathbb{R}^{n\times n}$,
\begin{equation}
    \left\Vert \boldsymbol{M}\boldsymbol{N}\right\Vert _{\infty}\leq\left\Vert \boldsymbol{M}\right\Vert _{\infty}\left\Vert \boldsymbol{N}\right\Vert _{\infty}.
\end{equation}
\label{lem:1}
\end{lemma}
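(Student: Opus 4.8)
The plan is to first settle what $\left\Vert\cdot\right\Vert_\infty$ denotes when applied to a matrix, because the stated inequality is false for the entrywise maximum (take $\boldsymbol{M}=\boldsymbol{N}$ equal to the all-ones matrix, where the left side is $2$ and the right side is $1$). The bound holds precisely for the operator norm induced by the vector max-norm of Definition \ref{def:1}, which admits the closed form
\[
    \left\Vert\boldsymbol{M}\right\Vert_\infty=\max_{1\leq i\leq n}\sum_{j=1}^{n}\left\vert M_{ij}\right\vert,
\]
the maximum absolute row sum. I would adopt this characterization as the working definition and then proceed by one of two routes.

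The abstract route uses only the compatibility inequality $\left\Vert\boldsymbol{M}x\right\Vert_\infty\leq\left\Vert\boldsymbol{M}\right\Vert_\infty\left\Vert x\right\Vert_\infty$, which holds for every vector $x$ by the definition of an induced norm. Applying it twice to the factorization $\boldsymbol{M}\boldsymbol{N}x=\boldsymbol{M}\left(\boldsymbol{N}x\right)$ gives $\left\Vert\boldsymbol{M}\boldsymbol{N}x\right\Vert_\infty\leq\left\Vert\boldsymbol{M}\right\Vert_\infty\left\Vert\boldsymbol{N}\right\Vert_\infty\left\Vert x\right\Vert_\infty$; dividing by $\left\Vert x\right\Vert_\infty$ for $x\neq0$ and taking the supremum over such $x$ yields the claim at once. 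This route never touches individual entries and generalizes verbatim to any induced norm.

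The direct route expands entries and is what I would write if a self-contained argument from Definition \ref{def:1} is preferred. Since $\left(\boldsymbol{M}\boldsymbol{N}\right)_{ik}=\sum_{j}M_{ij}N_{jk}$, the triangle inequality followed by bounding each inner row sum of $\boldsymbol{N}$ by its maximum gives $\sum_{k}\left\vert\left(\boldsymbol{M}\boldsymbol{N}\right)_{ik}\right\vert\leq\sum_{j}\left\vert M_{ij}\right\vert\sum_{k}\left\vert N_{jk}\right\vert\leq\left(\sum_{j}\left\vert M_{ij}\right\vert\right)\left\Vert\boldsymbol{N}\right\Vert_\infty$. Taking the maximum over the row index $i$ then factors out $\left\Vert\boldsymbol{M}\right\Vert_\infty$ and closes the argument.

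The only genuine obstacle is the interpretive one flagged at the start: recognizing that the lemma cannot be about the entrywise max-norm and must invoke the induced operator norm (equivalently, the max absolute row sum). Once the correct norm is fixed, both routes are short and essentially mechanical, so I would favor the abstract route for its brevity and transparency.
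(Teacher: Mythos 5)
Your proposal is correct, and both of your routes (the induced-norm compatibility argument and the direct row-sum expansion) are valid, standard proofs of submultiplicativity. There is, however, nothing in the paper to compare them against: the paper states Lemma \ref{lem:1} without any proof, treating it as a known fact, so your argument supplies what the paper omits. Your opening observation is also the most valuable part of the write-up in context. Definition \ref{def:1} only defines $\left\Vert\cdot\right\Vert_\infty$ for vectors, and the paper never says which matrix norm it intends; your counterexample (the all-ones matrix) correctly shows the lemma fails for the entrywise maximum, so the induced operator norm, i.e.\ the maximum absolute row sum, is the only reading under which the statement is true. Note that the paper's own later uses (computing $\left\Vert\boldsymbol{G}_1\right\Vert_\infty$ as the maximum of diagonal entries of a diagonal matrix) are consistent with either reading and so do not resolve the ambiguity; your proof pins down the interpretation that makes the lemma, and hence Proposition \ref{pro:1}, go through.
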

Then, we make the following assumptions.
\begin{assumption}
The desired and real configurations of the UAV are smooth and bounded, i.e.,
\begin{equation}
    \left\Vert \boldsymbol{\xi}_{1d}\left(t\right)\right\Vert _{\infty}\leq\bar{\xi}_{1d}, \left\Vert \dot{\boldsymbol{\xi}}_{1d}\left(t\right)\right\Vert _{\infty}\leq\bar{\xi}_{2d},\left\Vert \ddot{\boldsymbol{\xi}}_{1d}\left(t\right)\right\Vert _{\infty}\leq\bar{\xi}_{3d}
\end{equation}
\begin{equation}
    \left\Vert \boldsymbol{\xi}_{1}\left(t\right)\right\Vert _{\infty}\leq\bar{\xi}_{1},
    \left\Vert \dot{\boldsymbol{\xi}}_{1}\left(t\right)\right\Vert _{\infty}\leq\bar{\xi}_{2},\left\Vert \ddot{\boldsymbol{\xi}}_{1}\left(t\right)\right\Vert _{\infty}\leq\bar{\xi}_{3}.
\end{equation}
\label{ass:1}
\end{assumption}

\begin{assumption}
The UAV torques are bounded, i.e.,
\begin{equation}
    \tau_{\phi}\leq\tau_{max}, \tau_{\theta}\leq\tau_{max}, \tau_{\psi}\leq\tau_{max}.
\end{equation}
\label{ass:2}
\end{assumption}
Those assumptions are reasonable for UAVs in practice due to their physical constraints and limits \cite{6842371}.

\begin{proposition}
Consider the UAV dynamics in (\ref{eqn:at_dynamic_2}), $\boldsymbol{G}_1$, $\boldsymbol{G}_2$ and $g\left(\boldsymbol{\xi}_2\right)$ are then bounded, i.e.,
\begin{equation}
    \left\Vert \boldsymbol{G}_{1}\right\Vert _{\infty}\leq\bar{I},\left\Vert \boldsymbol{G}_{2}\right\Vert _{\infty}\leq\bar{l}_{1},\left\Vert \boldsymbol{G}_{2}^{-1}\right\Vert _{\infty}\leq\bar{l}_{2},\left\Vert g\left(\boldsymbol{\xi}_{2}\right)\right\Vert _{\infty}\leq\bar{\xi}_{2}^{2}.
\end{equation}
\label{pro:1}
\end{proposition}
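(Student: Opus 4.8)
The plan is to exploit the fact that three of the four objects are \emph{constant} diagonal matrices and only the last one, $g\left(\boldsymbol{\xi}_2\right)$, is state-dependent. Recall from (\ref{eqn:at_dynamic_2}) that
\begin{equation*}
\boldsymbol{G}_{1}=\mathrm{diag}\!\left(\tfrac{I_{y}-I_{z}}{I_{x}},\ \tfrac{I_{z}-I_{x}}{I_{y}},\ \tfrac{I_{x}-I_{y}}{I_{z}}\right),\qquad
\boldsymbol{G}_{2}=\mathrm{diag}\!\left(\tfrac{l_{a}}{I_{x}},\ \tfrac{l_{a}}{I_{y}},\ \tfrac{1}{I_{z}}\right).
\end{equation*}
Since the induced infinity norm of a matrix is its maximum absolute row sum, for a diagonal matrix it collapses to the largest magnitude among the diagonal entries. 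First I would therefore simply set $\bar{I}:=\max\bigl\{\lvert I_{y}-I_{z}\rvert/I_{x},\ \lvert I_{z}-I_{x}\rvert/I_{y},\ \lvert I_{x}-I_{y}\rvert/I_{z}\bigr\}$ and $\bar{l}_{1}:=\max\bigl\{l_{a}/I_{x},\ l_{a}/I_{y},\ 1/I_{z}\bigr\}$, both finite because the moments of inertia $I_x,I_y,I_z$ are strictly positive physical constants. This yields $\left\Vert\boldsymbol{G}_{1}\right\Vert_{\infty}\leq\bar{I}$ and $\left\Vert\boldsymbol{G}_{2}\right\Vert_{\infty}\leq\bar{l}_{1}$ directly.

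Next I would treat $\boldsymbol{G}_{2}^{-1}$. Because every diagonal entry of $\boldsymbol{G}_{2}$ is nonzero (again, positivity of $I_x,I_y,I_z$ and of the arm length $l_a$), the matrix is invertible and $\boldsymbol{G}_{2}^{-1}=\mathrm{diag}\!\left(I_{x}/l_{a},\ I_{y}/l_{a},\ I_{z}\right)$ is itself a constant diagonal matrix. Setting $\bar{l}_{2}:=\max\bigl\{I_{x}/l_{a},\ I_{y}/l_{a},\ I_{z}\bigr\}$ gives $\left\Vert\boldsymbol{G}_{2}^{-1}\right\Vert_{\infty}\leq\bar{l}_{2}$. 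It is worth remarking that this invertibility is exactly what makes the equivalent and switching control laws $\boldsymbol{U}_{eq}$ and $\boldsymbol{U}_{sw}$ in (\ref{eqn:U_detail}) well defined, so the bound serves a dual purpose later in the stability argument.

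Finally, for the state-dependent term I would invoke Assumption \ref{ass:1}. Since $g\left(\boldsymbol{\xi}_{2}\right)=\left[\dot{\theta}\dot{\psi},\ \dot{\phi}\dot{\psi},\ \dot{\phi}\dot{\theta}\right]^{T}$ and $\dot{\boldsymbol{\xi}}_{1}=\left[\dot{\phi},\dot{\theta},\dot{\psi}\right]^{T}$, the bound $\left\Vert\dot{\boldsymbol{\xi}}_{1}\right\Vert_{\infty}\leq\bar{\xi}_{2}$ means each of $\lvert\dot{\phi}\rvert,\lvert\dot{\theta}\rvert,\lvert\dot{\psi}\rvert$ is at most $\bar{\xi}_{2}$. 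Every component of $g\left(\boldsymbol{\xi}_{2}\right)$ is a product of two such terms, hence bounded in magnitude by $\bar{\xi}_{2}^{2}$, which gives $\left\Vert g\left(\boldsymbol{\xi}_{2}\right)\right\Vert_{\infty}\leq\bar{\xi}_{2}^{2}$ and completes the four claims. I do not anticipate a genuine obstacle here: the constant-matrix bounds are definitional once the diagonal structure is recognized, and the only substantive step is linking the bound on $g\left(\boldsymbol{\xi}_{2}\right)$ to the angular-velocity bound supplied by Assumption \ref{ass:1}. Lemma \ref{lem:1} is not required for this proposition itself, but I expect it to be the real workhorse when these four bounds are later combined submultiplicatively to bound the closed-loop control signal and establish recursive feasibility.
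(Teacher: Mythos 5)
Your proposal is correct and follows essentially the same route as the paper: bound the constant diagonal matrices $\boldsymbol{G}_1$, $\boldsymbol{G}_2$, $\boldsymbol{G}_2^{-1}$ by the maximum magnitude of their entries, and bound each component of $g\left(\boldsymbol{\xi}_2\right)$ by $\bar{\xi}_2^2$ using the angular-velocity bound from Assumption \ref{ass:1}. The only cosmetic difference is that the paper formally cites Lemma \ref{lem:1} for the $g\left(\boldsymbol{\xi}_2\right)$ bound while you argue it directly via scalar products (and your inclusion of absolute values in the definition of $\bar{I}$ is, if anything, slightly more careful than the paper's).
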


\begin{proof}
Since $\phi$, $\theta$, $\psi$ are the elements of $\boldsymbol{\xi}_1$, we have $\left\Vert \phi\right\Vert _{\infty}\leq\bar{\xi}_{1}$, $\left\Vert \theta\right\Vert _{\infty}\leq\bar{\xi}_{1}$, and $\left\Vert \psi\right\Vert _{\infty}\leq\bar{\xi}_{1}$. Similarly, $\left\Vert \dot{\phi}\right\Vert _{\infty}\leq\bar{\xi}_{2}$, $\left\Vert \dot{\theta}\right\Vert _{\infty}\leq\bar{\xi}_{2}$, and $\left\Vert \dot{\psi}\right\Vert _{\infty}\leq\bar{\xi}_{2}$. According to Lemma \ref{lem:1}, it follows that
\begin{equation}
    \left\Vert g\left(\boldsymbol{\xi}_{2}\right)\right\Vert _{\infty}=\left\Vert \left[\begin{array}{c}
    \dot{\theta}\dot{\psi}\\
    \dot{\phi}\dot{\psi}\\
    \dot{\phi}\dot{\theta}
    \end{array}\right]\right\Vert _{\infty}\leq\bar{\xi}_{2}^{2}<\infty.
\end{equation}

On the other hand, as the values of UAV parameters are bounded, we have
\begin{equation}
    \left\Vert \boldsymbol{G}_{1}\right\Vert _{\infty}=\max\left\{ \dfrac{I_{y}-I_{z}}{I_{x}},\dfrac{I_{z}-I_{x}}{I_{y}},\dfrac{I_{x}-I_{y}}{I_{z}}\right\} \leq\bar{I}<\infty,
\end{equation}
\begin{equation}
    \left\Vert \boldsymbol{G}_{2}\right\Vert _{\infty}=\max\left\{ \dfrac{l_{a}}{I_{x}},\dfrac{l_{a}}{I_{y}},\dfrac{1}{I_{z}}\right\} \leq\bar{l}_{1}<\infty,
\end{equation}
\begin{equation}
    \left\Vert \boldsymbol{G}_{2}^{-1}\right\Vert _{\infty}=\max\left\{ \dfrac{I_{x}}{l_{a}},\dfrac{I_{y}}{l_{a}},I_{z}\right\} \leq\bar{l}_{2}<\infty.
\end{equation}
\end{proof}

\begin{proposition}
Consider the UAV being controlled by the SMC technique (\ref{eqn:U} - \ref{eqn:U_detail}), the error signal $\boldsymbol{z}_2$ and the sliding surface $s$ are then bounded with $\left\Vert \lambda\right\Vert _{\infty}=\bar{\lambda}$.
\label{pro:2}
\end{proposition}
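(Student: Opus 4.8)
The plan is to obtain explicit finite upper bounds for $\boldsymbol{z}_2$ and $s$ directly from Assumption \ref{ass:1} and Lemma \ref{lem:1}. The key observation is that once the real and desired attitudes are assumed smooth and bounded, the tracking errors they define are automatically bounded; the SMC law then contributes through the Lyapunov relation $\dot{V}_\text{SMC}\leq 0$ in (\ref{eqn:negativeV}), which guarantees that the sliding variable does not grow along closed-loop trajectories and so is consistent with the bound derived below.

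First I would bound $\boldsymbol{z}_2$. Since $\boldsymbol{z}_2=\boldsymbol{\xi}_2-\boldsymbol{\xi}_{2d}=\dot{\boldsymbol{\xi}}_1-\dot{\boldsymbol{\xi}}_{1d}$, the triangle inequality for the infinity norm together with Assumption \ref{ass:1} gives $\left\Vert\boldsymbol{z}_2\right\Vert_\infty\leq\bar{\xi}_2+\bar{\xi}_{2d}<\infty$, and the same argument applied to $\boldsymbol{z}_1=\boldsymbol{\xi}_1-\boldsymbol{\xi}_{1d}$ yields $\left\Vert\boldsymbol{z}_1\right\Vert_\infty\leq\bar{\xi}_1+\bar{\xi}_{1d}<\infty$. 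Next, using the definition $s=\boldsymbol{z}_2+\lambda\boldsymbol{z}_1$ from (\ref{eqn:sliding_surface_1}), the triangle inequality and the submultiplicativity of Lemma \ref{lem:1} give
\begin{equation}
\left\Vert s\right\Vert_\infty\leq\left\Vert\boldsymbol{z}_2\right\Vert_\infty+\left\Vert\lambda\right\Vert_\infty\left\Vert\boldsymbol{z}_1\right\Vert_\infty\leq\left(\bar{\xi}_2+\bar{\xi}_{2d}\right)+\bar{\lambda}\left(\bar{\xi}_1+\bar{\xi}_{1d}\right)<\infty,
\end{equation}
where $\left\Vert\lambda\right\Vert_\infty=\bar{\lambda}$. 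This closes the argument, since both quantities admit finite bounds expressed only through the constants of Assumption \ref{ass:1} and the gain $\bar{\lambda}$.

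I do not expect a genuine obstacle here: the substance rests entirely on the boundedness postulated in Assumption \ref{ass:1}. The only point requiring care is the appeal to Lemma \ref{lem:1} for the product $\lambda\boldsymbol{z}_1$, which is what licenses pulling $\bar{\lambda}$ outside the norm; one should note that the matrix-vector case used here is the specialization of the matrix-matrix statement of the lemma. It is worth remarking that an alternative, more self-contained route would instead start from $\dot{V}_\text{SMC}\leq 0$ to conclude $\left\Vert s(t)\right\Vert\leq\left\Vert s(0)\right\Vert$ and then recover the bound on $\boldsymbol{z}_2$ from $s$ and $\boldsymbol{z}_1$, but the direct route above is shorter and equally rigorous.
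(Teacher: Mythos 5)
Your proof is correct for the statement as written, but it takes a genuinely different route from the paper's. You bound $\boldsymbol{z}_1$ and $\boldsymbol{z}_2$ purely from Assumption \ref{ass:1} via the triangle inequality and then bound $s$ via Lemma \ref{lem:1}; the SMC plays no role beyond a consistency remark. The paper instead uses the closed-loop property $\dot{V}_\text{SMC}\leq 0$ to argue that the error does not grow, writes $\left\Vert \boldsymbol{Z}(t)\right\Vert_\infty\leq\left\Vert \boldsymbol{Z}(0)\right\Vert_2=:\bar{Z}$, and obtains $\left\Vert s\right\Vert_\infty\leq(1+\bar{\lambda})\bar{Z}$ — which is exactly the alternative you sketch in your closing paragraph. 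The difference matters downstream: the specific constant $\bar{Z}$ and the expression $(1+\bar{\lambda})\bar{Z}$ are reused verbatim in the feasibility condition (\ref{the:1}) and in the region of attraction of Theorem 2, so your bounds $\bar{\xi}_2+\bar{\xi}_{2d}$ and $(\bar{\xi}_2+\bar{\xi}_{2d})+\bar{\lambda}(\bar{\xi}_1+\bar{\xi}_{1d})$, while finite, would force those later statements to be rewritten with different constants. Your approach buys simplicity and independence from the controller, but at the cost of partially begging the question (Assumption \ref{ass:1} already postulates boundedness of the real trajectory, which is what the SMC is supposed to deliver) and of losing the link to the initial error that the paper exploits in Remark 1. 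For what it is worth, the paper's own inference from $\dot{V}_\text{SMC}\leq 0$ (which controls $s$, not $\boldsymbol{Z}$ directly) to $\left\Vert \boldsymbol{Z}(t)\right\Vert_\infty\leq\left\Vert \boldsymbol{Z}(0)\right\Vert_\infty$ also skips a step — one needs the sliding dynamics $\dot{\boldsymbol{z}}_1=-\lambda\boldsymbol{z}_1+s$ to pass from bounded $s$ to bounded $\boldsymbol{z}_1$ — so neither argument is fully airtight, but yours is valid under the stated assumptions.
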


\begin{proof}
Since $\dot{V}_{SMC}\leq0$, it follows that
\begin{equation}
    \left\Vert \boldsymbol{Z}\left(t\right)\right\Vert _{\infty}\leq\left\Vert \boldsymbol{Z}\left(0\right)\right\Vert _{\infty}\leq\left\Vert \boldsymbol{Z}\left(0\right)\right\Vert _{2}.
\end{equation}

Besides, we have $\left\Vert \boldsymbol{z}_{1}\right\Vert _{\infty}\leq\left\Vert \boldsymbol{Z}\right\Vert _{\infty}$ and $\left\Vert \boldsymbol{z}_{2}\right\Vert _{\infty}\leq\left\Vert \boldsymbol{Z}\right\Vert _{\infty}$. Let $\left\Vert \boldsymbol{Z}\left(0\right)\right\Vert _{2}=\bar{Z}$, we have $\left\Vert \boldsymbol{z}_{1}\right\Vert _{\infty}\leq\bar{Z}$ and $\left\Vert \boldsymbol{z}_{2}\right\Vert _{\infty}\leq\bar{Z}$. As the result, the sliding surface is bounded, i.e., 
\begin{equation}
    \left\Vert s\right\Vert _{\infty}=\left\Vert \boldsymbol{z}_{2}+\lambda\boldsymbol{z}_{1}\right\Vert_{\infty} \leq\left(1+\bar{\lambda}\right)\bar{Z}.
\end{equation}
\end{proof}

\begin{theorem}
Suppose the control gain matrices are chosen as $c_1$ and $c_2$. Let $\left\Vert c_{1}\right\Vert _{\infty}=\bar{c}_{1}$, $\left\Vert c_{2}\right\Vert _{\infty}=\bar{c}_{2}$. For $h\left(\boldsymbol{\xi}\right) = \boldsymbol{U}_\text{SMC}$, 
the LNMPC (\ref{eqn:lnmpc} - \ref{eqn:lnmpc1}) admits recursive feasibility for all $t\geq0$ if the following relation is satisfied:
\begin{equation}
    \bar{l}_{2}\left(\bar{\xi}_{3d}+\bar{\lambda}\bar{Z}+\bar{I}\bar{\xi}_{2}^{2}\right)+\bar{l}_{2}\left[\bar{c}_{1}+\bar{c}_{2}\left(1+\bar{\lambda}\right)\bar{Z}\right]\leq\tau_{max}.
    \label{the:1}
\end{equation}
\end{theorem}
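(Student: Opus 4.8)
The plan is to prove recursive feasibility by exhibiting, at every time instant, an explicit admissible input for the optimization problem (\ref{eqn:lnmpc}--\ref{eqn:lnmpc1}), namely the sliding mode law $\boldsymbol{U}_\text{SMC}$ used as $h(\boldsymbol{\xi})$. Since the right-hand side of the contraction constraint (\ref{eqn:contraction}) is itself generated by $h(\boldsymbol{\xi})=\boldsymbol{U}_\text{SMC}$, setting $\boldsymbol{U}(t_k)=\boldsymbol{U}_\text{SMC}$ makes (\ref{eqn:contraction}) hold with equality, so the contraction constraint is met automatically. The dynamics and initial-condition constraints hold by construction, and the state constraint follows from the boundedness of $\boldsymbol{z}_1$, $\boldsymbol{z}_2$ (and hence of $\boldsymbol{\xi}$) established in Proposition \ref{pro:2}. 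Thus feasibility collapses to a single requirement: that $\boldsymbol{U}_\text{SMC}$ obey the input bound $\left\Vert\boldsymbol{U}_\text{SMC}\right\Vert_\infty\leq\tau_{max}$. Once this holds at $t_k$, the same certificate is available at every later instant, giving feasibility for all $t\geq0$.

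First I would write $\boldsymbol{U}_\text{SMC}$ out from (\ref{eqn:U}) and (\ref{eqn:U_detail}),
\begin{equation*}
\boldsymbol{U}_\text{SMC}=\boldsymbol{G}_{2}^{-1}\left(\dot{\boldsymbol{\xi}}_{2d}-\lambda\boldsymbol{z}_{2}-\boldsymbol{G}_{1}g\left(\boldsymbol{\xi}_{2}\right)\right)-\boldsymbol{G}_{2}^{-1}\left(c_{1}\,\text{sign}\left(s\right)+c_{2}s\right),
\end{equation*}
and then bound its infinity norm by applying the triangle inequality together with the submultiplicativity of the infinity norm (Lemma \ref{lem:1}), treating the equivalent-control block and the switching block separately. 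Into the resulting term-by-term estimate I would substitute the scalar bounds already collected: $\left\Vert\boldsymbol{G}_{2}^{-1}\right\Vert_\infty\leq\bar{l}_{2}$, $\left\Vert\boldsymbol{G}_{1}\right\Vert_\infty\leq\bar{I}$, $\left\Vert g(\boldsymbol{\xi}_{2})\right\Vert_\infty\leq\bar{\xi}_{2}^{2}$ from Proposition \ref{pro:1}; $\left\Vert\dot{\boldsymbol{\xi}}_{2d}\right\Vert_\infty=\left\Vert\ddot{\boldsymbol{\xi}}_{1d}\right\Vert_\infty\leq\bar{\xi}_{3d}$ from Assumption \ref{ass:1}; and $\left\Vert\lambda\right\Vert_\infty=\bar{\lambda}$, $\left\Vert\boldsymbol{z}_{2}\right\Vert_\infty\leq\bar{Z}$, $\left\Vert s\right\Vert_\infty\leq(1+\bar{\lambda})\bar{Z}$ from Proposition \ref{pro:2}, together with $\left\Vert c_{1}\right\Vert_\infty=\bar{c}_{1}$, $\left\Vert c_{2}\right\Vert_\infty=\bar{c}_{2}$, and $\left\Vert\text{sign}(s)\right\Vert_\infty\leq1$.

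Collecting the equivalent-control terms as $\bar{l}_{2}(\bar{\xi}_{3d}+\bar{\lambda}\bar{Z}+\bar{I}\bar{\xi}_{2}^{2})$ and the switching terms as $\bar{l}_{2}[\bar{c}_{1}+\bar{c}_{2}(1+\bar{\lambda})\bar{Z}]$ reproduces exactly the left-hand side of (\ref{the:1}). Hence the hypothesized inequality is precisely the statement $\left\Vert\boldsymbol{U}_\text{SMC}\right\Vert_\infty\leq\tau_{max}$, certifying that $\boldsymbol{U}_\text{SMC}$ is admissible and thereby closing the feasibility argument. The proof is thus a bookkeeping exercise once $\boldsymbol{U}_\text{SMC}$ is recognized as the feasible certificate, so I expect no deep obstacle; the only points needing care are that $\text{sign}(s)$ is applied componentwise so that its infinity norm is bounded by the constant $1$ (which is what yields the state-independent $\bar{c}_{1}$ contribution), that $\dot{\boldsymbol{\xi}}_{2d}$ must be identified with $\ddot{\boldsymbol{\xi}}_{1d}$ to invoke the bound $\bar{\xi}_{3d}$, and that the factor $(1+\bar{\lambda})$ comes already folded into the bound on $\left\Vert s\right\Vert_\infty$ from Proposition \ref{pro:2}.
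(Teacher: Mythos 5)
Your proposal is correct and follows essentially the same route as the paper: exhibit $\boldsymbol{U}_\text{SMC}=h(\boldsymbol{\xi})$ as a feasible candidate and reduce feasibility to the input bound $\left\Vert\boldsymbol{U}_\text{SMC}\right\Vert_\infty\leq\tau_{max}$, which is obtained by the same term-by-term infinity-norm estimate using Propositions \ref{pro:1} and \ref{pro:2}. Your version is in fact slightly more complete than the paper's, since you also explicitly note why the contraction, dynamics, and state constraints are satisfied by this candidate, whereas the paper only verifies the input bound.
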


\begin{proof}
Let $\boldsymbol{U}_{max}=\left[\tau_{\phi max},\tau_{\theta max},\tau_{\psi max}\right]^{T}$. Given the present system state $\boldsymbol{\xi}\left(t\right)$, $h\left(\boldsymbol{\xi}\right)$ is always feasible for the LNMPC problem (\ref{eqn:lnmpc} - \ref{eqn:lnmpc1}) if $\left\vert h\left(\boldsymbol{\xi}\right)\right\vert\leq\boldsymbol{U}_{max}$.

By taking the norm of both sides of (\ref{eqn:U}) and noting Proposition \ref{pro:1} and \ref{pro:2}, we get 
\begin{equation}
\begin{aligned}
    \left\Vert U_\text{SMC}\right\Vert _{\infty}&\leq\left\Vert \boldsymbol{U}_{eq}\right\Vert _{\infty}+\left\Vert \boldsymbol{U}_{sw}\right\Vert _{\infty}\\
    &=\left\Vert \boldsymbol{G}_{2}^{-1}\left(\dot{\boldsymbol{\xi}}_{2d}-\lambda\boldsymbol{z}_{2}-\boldsymbol{G}_{1}g\left(\boldsymbol{\xi}_{2}\right)\right)\right\Vert +\left\Vert -\boldsymbol{G}_{2}^{-1}\left(c_{1}\text{sign}\left(s\right)+c_{2}s\right)\right\Vert _{\infty}\\
    &\leq\bar{l}_{2}\left(\bar{\xi}_{3d}+\bar{\lambda}\bar{Z}+\bar{I}\bar{\xi}_{2}^{2}\right)+\bar{l}_{2}\left[\bar{c}_{1}+\bar{c}_{2}\left(1+\bar{\lambda}\right)\bar{Z}\right].
\end{aligned}
\label{eqn:the1}
\end{equation}
From (\ref{the:1}) and (\ref{eqn:the1}), we have $\left\Vert h\left(\boldsymbol{\xi}\right)\right\Vert _{\infty}=\left\Vert \boldsymbol{U_\text{SMC}}\right\Vert _{\infty}\leq\tau_{max}$, which completes the proof.
\end{proof}

Noting that the SMC acts as an initial guess for the optimization problem (\ref{eqn:lnmpc} - \ref{eqn:lnmpc1}) rather than being used in the optimization process. Therefore, the feasibility is guaranteed by the SMC controller, and the control performance is optimized by the LNMPC controller.

\begin{theorem}
The closed-loop system under LNMPC is asymptotically stable with respect to the equilibrium $\boldsymbol{Z}=\left[\begin{array}{c}
\boldsymbol{z}_{1}\\
\boldsymbol{z}_{2}
\end{array}\right]=\left[\begin{array}{c}
0\\
0
\end{array}\right]$ and the UAV attitude will converge to the desired reference under the LNMPC.
\end{theorem}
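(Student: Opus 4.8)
The plan is to adopt the SMC Lyapunov function $V_\text{SMC}=\frac{1}{2}s^Ts$ of (\ref{eqn:V}) as the Lyapunov candidate for the closed-loop system and to let the contraction constraint (\ref{eqn:contraction}) transfer the decrease property of the SMC to the LNMPC. First I would check that $V_\text{SMC}$ is a legitimate candidate: it is positive definite and radially unbounded in the sliding variable $s=\boldsymbol{z}_2+\lambda\boldsymbol{z}_1$ and vanishes exactly when $s=0$, which is the equilibrium $\boldsymbol{Z}=0$ up to the sliding-manifold reduction carried out at the end.

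Next I would invoke the recursive feasibility established in the previous theorem, which guarantees that $h\left(\boldsymbol{\xi}\right)=\boldsymbol{U}_\text{SMC}$ is an admissible candidate at every sampling instant $t_k$. Because any optimal LNMPC input must satisfy (\ref{eqn:contraction}) and the right-hand side of that constraint is precisely $\dot{V}_\text{SMC}$ evaluated under $h\left(\boldsymbol{\xi}\right)$, combining it with (\ref{eqn:negativeV}) chains the two bounds into
\[
\dot{V}_\text{LNMPC}(t_k)\leq\dot{V}_\text{SMC}(t_k)=-s(t_k)^T\left(c_1\,\text{sign}\left(s(t_k)\right)+c_2 s(t_k)\right).
\]
Since $c_1$ and $c_2$ are positive diagonal matrices, the right-hand side equals the negative of $\sum_i\left(c_{1,i}|s_i|+c_{2,i}s_i^2\right)$, which is strictly negative whenever $s\neq0$; hence $\dot{V}_\text{LNMPC}$ is negative definite in $s$.

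From positive definiteness of $V_\text{LNMPC}$ together with negative definiteness of its derivative, Lyapunov's direct method yields $s\to0$ asymptotically. The closing step restricts the error dynamics to the sliding manifold: setting $s=0$ in $s=\dot{\boldsymbol{z}}_1+\lambda\boldsymbol{z}_1$ gives the reduced system $\dot{\boldsymbol{z}}_1=-\lambda\boldsymbol{z}_1$, which is exponentially stable because $\lambda$ is a positive gain matrix, so $\boldsymbol{z}_1\to0$ and therefore $\boldsymbol{z}_2=\dot{\boldsymbol{z}}_1\to0$. This drives $\boldsymbol{Z}\to0$ and the UAV attitude to its reference, which is exactly the assertion.

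The step I expect to be the main obstacle is bridging the pointwise-in-time contraction constraint—imposed only at the discrete instants $t_k$—to continuous-time asymptotic stability, while handling the discontinuity of the $\text{sign}$ term. A fully rigorous treatment would require either a sampled-data/receding-horizon argument showing that the per-step Lyapunov decrease accumulates over the sampling sequence, or a LaSalle-type invariance argument to exclude trajectories lingering away from the manifold; I would take particular care that the negative-definite bound holds uniformly along the sampling sequence rather than only instantaneously.
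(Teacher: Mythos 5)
Your proposal is correct and rests on the same central idea as the paper's proof: the contraction constraint (\ref{eqn:contraction}), combined with feasibility of $h(\boldsymbol{\xi})=\boldsymbol{U}_\text{SMC}$, transfers the Lyapunov decrease (\ref{eqn:negativeV}) of the SMC to the LNMPC. The execution differs, however. The paper invokes converse Lyapunov theorems to produce comparison functions $\vartheta_1,\vartheta_2,\vartheta_3$ bounding $V_\text{SMC}$ and its derivative in terms of $\left\Vert\boldsymbol{\xi}\right\Vert$, then cites Khalil's Theorem 4.16 to conclude asymptotic stability with an explicit region of attraction tied to the feasibility condition (\ref{the:1}). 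You instead work directly with $V_\text{SMC}=\frac{1}{2}s^Ts$, show $\dot V_\text{LNMPC}\leq -s^T(c_1\,\text{sign}(s)+c_2 s)$ is negative definite in $s$, deduce $s\to 0$, and then reduce to the sliding manifold $\dot{\boldsymbol{z}}_1=-\lambda\boldsymbol{z}_1$ to conclude $\boldsymbol{Z}\to 0$. Your two-stage route is more elementary and in one respect more careful: $V_\text{SMC}$ vanishes on the entire manifold $s=0$, not only at $\boldsymbol{Z}=0$, so it is only positive semidefinite in the tracking error and the paper's asserted sandwich $\vartheta_1(\left\Vert\boldsymbol{\xi}\right\Vert)\leq V_\text{SMC}\leq\vartheta_2(\left\Vert\boldsymbol{\xi}\right\Vert)$ cannot hold literally; Lyapunov's direct method applied to $V_\text{SMC}$ only yields $s\to 0$, and the manifold-reduction step you supply is precisely what is needed to finish. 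What the paper's formulation buys in exchange is the explicit region of attraction $\mathbb{X}$ and its link to Theorem 1. The sampled-data gap you flag --- the constraint is enforced only at the instants $t_k$ --- is present in both arguments; the paper dismisses it with the remark that the optimal input is held for one sampling period, so your honest identification of it is a point in your favor rather than a defect relative to the paper.
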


\begin{proof}
Since the Lyapunov function $V_\text{SMC}$ in (\ref{eqn:V}) is continuously differentiable and radially unbounded, by using the converse Lyapunov theorems \cite{Khalil}, there exist functions $\vartheta_i,i=1,2,3$ that satisfy the following inequalities
\begin{equation}
\begin{aligned}
    &\vartheta_{1}\left(\left\Vert \boldsymbol{\xi}\right\Vert \right)\leq V_\text{SMC}\left(\boldsymbol{\xi}\right)\leq\vartheta_{2}\left(\left\Vert \boldsymbol{\xi}\right\Vert \right),\\
    &\dfrac{\partial V\left(\boldsymbol{\xi}\right)}{\partial\boldsymbol{\xi}}f\left(\boldsymbol{\xi},h\left(\boldsymbol{\xi}\right)\right)\leq-\vartheta_{3}\left(\left\Vert \boldsymbol{\xi}\right\Vert \right).
\end{aligned}
\end{equation}

Consider contraction constraint (\ref{eqn:contraction}) and note that the optimal solution will be implemented for one sampling period each time, we have 
\begin{equation}
    \dfrac{\partial V\left(\boldsymbol{\xi}\right)}{\partial\boldsymbol{\xi}}f\left(\boldsymbol{\xi},\boldsymbol{U}\right)\leq\dfrac{\partial V\left(\boldsymbol{\xi}\right)}{\partial\boldsymbol{\xi}}f\left(\boldsymbol{\xi},h\left(\boldsymbol{\xi}\right)\right)\leq-\vartheta_{3}\left(\left\Vert \boldsymbol{\xi}\right\Vert \right).
\end{equation}

Based on Theorem 4.16 in \cite{Khalil}, we conclude that the closed-loop system is asymptotically stable with a guaranteed region of attraction:
\begin{equation}
    \mathbb{X}=\left\{\boldsymbol{\xi}\in\mathbb{R}^{6}\vert\bar{l}_{2}\left(\bar{\xi}_{3d}+\bar{\lambda}\bar{Z}+\bar{I}\bar{\xi}_{2}^{2}\right)+\bar{l}_{2}\left[\bar{c}_{1}+\bar{c}_{2}\left(1+\bar{\lambda}\right)\bar{Z}\right]\leq\tau_{max}\right\}.
\end{equation}

In addition, since there is no other constraints on control gains $\bar{\lambda}$, $\bar{c}_1$, $\bar{c}_2$, $\mathbb{X}$ can be enlarged by reducing the magnitude of those control gains as long as (\ref{the:1}) is satisfied. The LNMPC is thus recursively feasible, and the closed-loop system is stable.
\end{proof}

\begin{remark}
Condition (\ref{the:1}) can be satisfied by choice of $c_1$, $c_2$ and $\lambda$ as the remaining parameters are bounded and can be pre-determined. Specifically, UAV parameters represented in (\ref{eqn:at_dynamic_2}) are bounded according to Proposition \ref{pro:1} and their values can be determined based on the physical limits of the UAV. Similarly, the reference attitude values, control signals and state limits are all specified. Since $\dot{V}_{SMC}\leq0$ is satisfied, the system error is decreased over time leading to the initial error to be considered as the bounded error $\boldsymbol{Z}$. Therefore, (\ref{the:1}) can be fulfilled by varying the values of $\bar{c}_1$, $\bar{c}_2$, and $\bar{\lambda}$.
\end{remark}

\begin{remark}
In practice, the proposed controller can be implemented in the embedded computer system of UAVs for real-time control since the computation time to solve the LNMPC problem (\ref{eqn:lnmpc} - \ref{eqn:lnmpc1}) can be effectively handled by the optimized libraries such as ACADO \cite{Houska2011a}. In addition, software and libraries like CasADi \cite{Andersson2019}, FORCES Pro \cite{zanelli2020forces}, and VIATOC \cite{kalmari2015toolkit} can also be used to handle the LNMPC problem in different UAV platforms.
\end{remark}
\section{Results}
\label{results}
In this section, the performance of the proposed LNMPC controller is evaluated and compared with two state-of-the-art nonlinear controllers, the backstepping controller (BSC) and the sliding mode controller (SMC). The UAV model used is the AscTec Pelican as shown in Figure \ref{fig:vehicle} with parameters shown in Table \ref{tbl:params}.
\begin{table}[h!]
\centering
\caption{UAV parameters}
\label{tbl:params}
\begin{tabular}{|l|l|l|}
\hline
$m$         & 1.0       & Mass of the UAV [kg]                                    \\ \hline
$g$         & 9.8       & Gravity [m/s$^2$]                                             \\ \hline
$Ix$, $Iy$  & 0.01      & Moment of inertia about Bx, By axis, respectively [kg.m$^2$]  \\ \hline
$Iz$        & 0.02      & Moment of inertia about Bz axis [kg.m$^2$]                    \\ \hline
$l_a$       & 0.21      & UAV arm length [m]                                      \\ \hline
\end{tabular}
\end{table}

Constraints on the UAV are defined as
\begin{equation}
\begin{aligned}
    &-\dfrac{\pi}{2}\leq\phi,\theta\leq\dfrac{\pi}{2},\\
    &-\dfrac{\pi}{2}\leq\dot{\phi},\dot{\theta},\dot{\psi}\leq\dfrac{\pi}{2},\\
    &-0.1\leq\tau_{\phi},\tau_{\theta},\tau_{\psi}\leq0.1.
\end{aligned}
\label{eq:limits}
\end{equation}

For the LNMPC, the sampling period is chosen as $dt = 0.02$ s, the predictive horizon is chosen as $T=30dt$, and the weight matrices are chosen as
\begin{equation}
    \begin{aligned}
    &\boldsymbol{P}=\boldsymbol{Q}=\text{diag}\left(30.0,30.0,30.0,1.0,1.0,1.0\right),\\
    &\boldsymbol{R}=\text{diag}\left(1.0,1.0,1.0\right).
    \end{aligned}
\end{equation}

Since the BSC and SMC do not consider the UAV constraints, their control signals are limited to $u \in \{u_{max},u_{min}\}$ as follow:
\begin{equation}
    u=\min\left\{ u_{max},\max\left\{ u,u_{min}\right\} \right\}.
\end{equation}
Comparisons are then conducted in four scenarios with different input references.

\subsection{Scenario 1: Step input}
This scenario considers a step input. The UAV is initialized at a hovering state where all attitude angles are zeros. The new reference angles are then set to $\phi_d=\theta_d=\psi_d=1$ rad. The system responses are shown in Figure \ref{fig:tracking_step}. It can be seen that all controllers can drive the UAV to reach the reference without overshooting. However, the proposed controller smoothly drives the angles to their reference values within 1 s, while other controllers require a longer time, i.e., 2 s with BSC and 2.5 s with SMC. The proposed controller thus introduces the fastest response due to its ability to predict future states to generate more efficient control signals.

\begin{figure}[h!]
\centering
    \centering
    \includegraphics[width=0.6\textwidth]{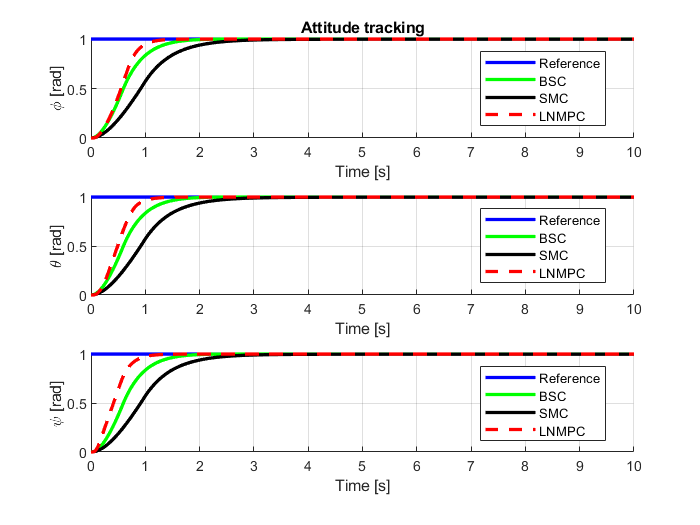}
    \caption{The step responds of three controllers in Scenario 1}
    \label{fig:tracking_step}
\end{figure}


\subsection{Scenario 2: Time-varying reference}
In this scenario, the desired attitudes periodically change over time according to the following equations: 
\begin{equation}
\begin{aligned}
    &\phi_{d}=\dfrac{1}{3}\sin\left(t\right)+\dfrac{1}{2}\cos\left(2t\right)\\
    &\theta_{d}=\dfrac{1}{3}\cos\left(t\right)+\dfrac{1}{2}\sin\left(2t\right)\\
    &\psi_{d}=\dfrac{\pi}{10}t+\dfrac{1}{4}\cos\left(2t\right)
\end{aligned}
\end{equation}

\begin{figure}[h!]
\centering
    \begin{subfigure}[b]{0.49\textwidth}
    \centering
    \includegraphics[width=\textwidth]{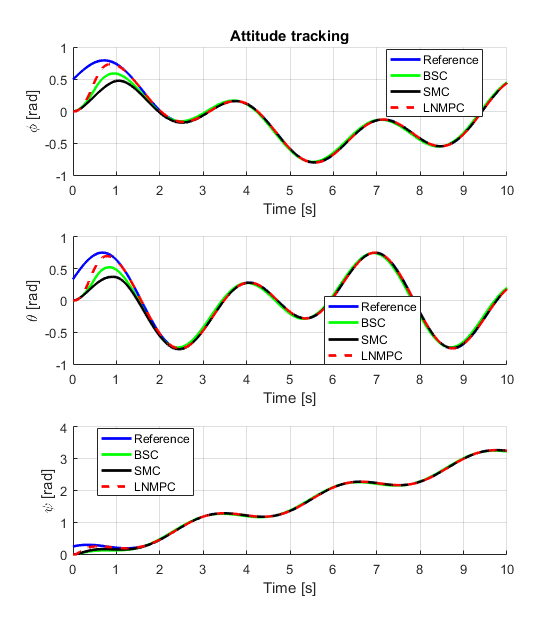}
    \caption{Tracked trajectories}
    \label{fig:trajectories}
    \end{subfigure}
    \begin{subfigure}[b]{0.49\textwidth}
    \centering
    \includegraphics[width=\textwidth]{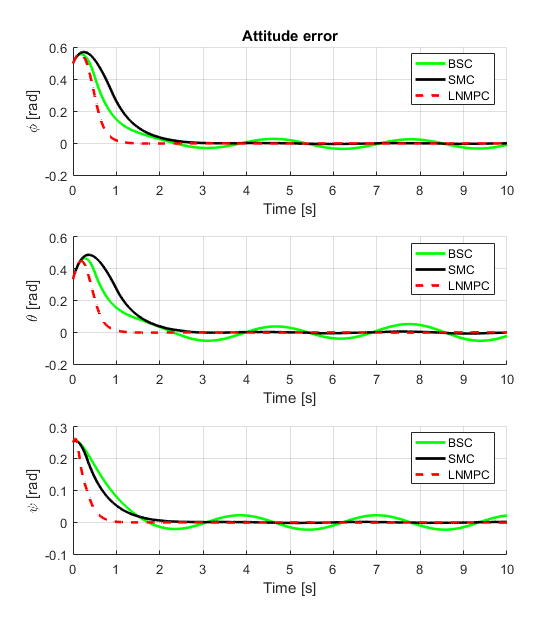}
    \caption{Tracking errors}
    \label{fig:error}
    \end{subfigure}
    \caption{Trajectory tracking results of three controllers for Scenario 2}
    \label{fig:tracking}
\end{figure}

\begin{figure}[h!]
\centering
    \begin{subfigure}[b]{0.4\textwidth}
    \centering
    \includegraphics[width=\textwidth]{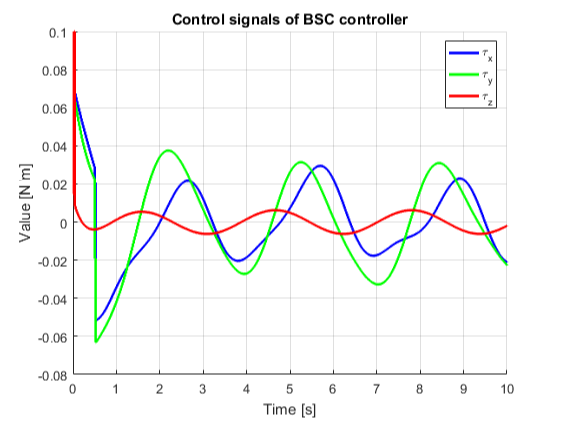}
    \caption{BSC}
    \label{fig:bsc}
    \end{subfigure}
    \begin{subfigure}[b]{0.4\textwidth}
    \centering
    \includegraphics[width=\textwidth]{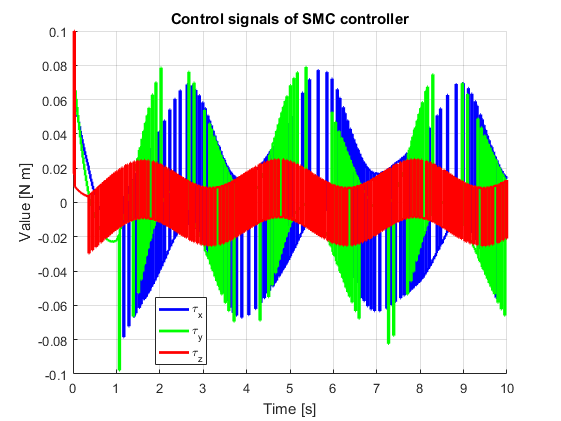}
    \caption{SMC}
    \label{fig:smc}
    \end{subfigure}
    \begin{subfigure}[b]{0.4\textwidth}
    \centering
    \includegraphics[width=\textwidth]{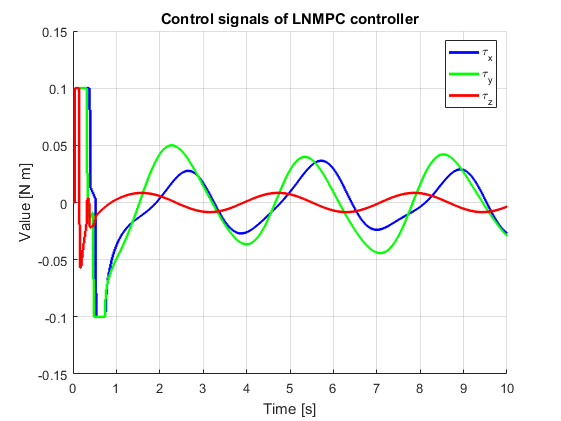}
    \caption{LNMPC}
    \label{fig:lnmpc}
    \end{subfigure}
    \caption{Control signals of three controllers in Scenario 2}
    \label{fig:control}
\end{figure}

Figure \ref{fig:tracking} shows the tracking results. It can be seen that the LNMPC introduces the fastest transient response by first reaching the references. At the steady-state, the LNMPC also has the smallest tracking errors. The BSC, on the other hand, introduces a relatively fast response. However, as shown in Figure \ref{fig:error}, its performance is affected by the system constraints causing large steady-state errors. The SMC introduces comparable tracking performance to the LNMPC as it is not constrained after reaching the desired reference. However, it has the slowest response and generates chattering control signals as shown in Figure \ref{fig:control}.

\subsection{Scenario 3: Saturated control signal}
To further evaluate the performance of the proposed controller, more challenging reference trajectories are used in Scenario 3, in which control signals become saturated in a certain period. The equations of those trajectories are given by: 
\begin{equation}
\begin{aligned}
    &\phi_{d}=\dfrac{1}{3}\sin\left(2t\right)+\dfrac{1}{6}\sin\left(6t\right)\\
    &\theta_{d}=\dfrac{1}{3}\cos\left(2t\right)+\dfrac{1}{6}\cos\left(4t\right)\\
    &\psi_{d}=\dfrac{\pi}{10}t+\dfrac{1}{5}\sin\left(3t\right)+0.5
\end{aligned}
\end{equation}

\begin{figure}[h!]
\centering
    \begin{subfigure}[b]{0.49\textwidth}
    \centering
    \includegraphics[width=\textwidth]{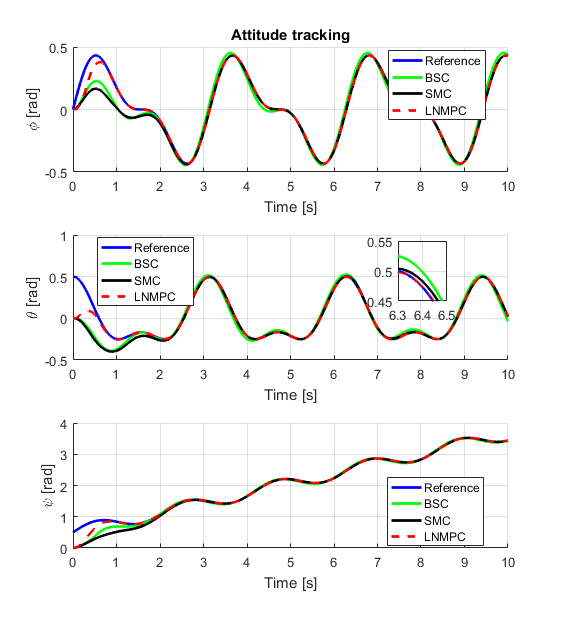}
    \caption{Tracked trajectories}
    \label{fig:trajectories2}
    \end{subfigure}
    \begin{subfigure}[b]{0.49\textwidth}
    \centering
    \includegraphics[width=\textwidth]{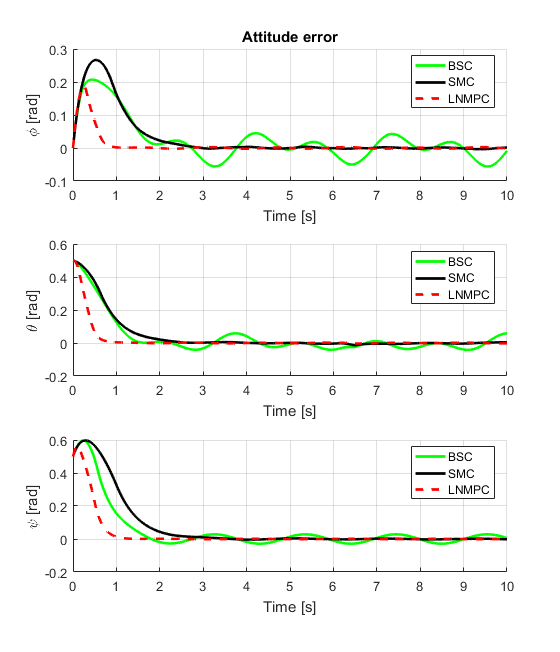}
    \caption{Tracking errors}
    \label{fig:error2}
    \end{subfigure}
    \caption{Tracking trajectories and errors of three controllers in Scenario 3}
    \label{fig:tracking2}
\end{figure}

\begin{figure}[h!]
\centering
    \begin{subfigure}[b]{0.4\textwidth}
    \centering
    \includegraphics[width=\textwidth]{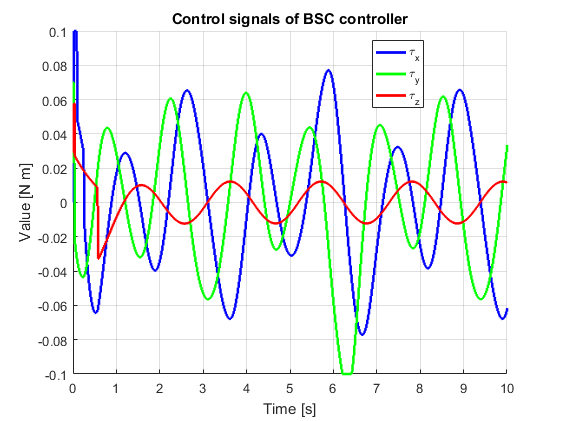}
    \caption{BSC}
    \label{fig:bsc2}
    \end{subfigure}
    \begin{subfigure}[b]{0.4\textwidth}
    \centering
    \includegraphics[width=\textwidth]{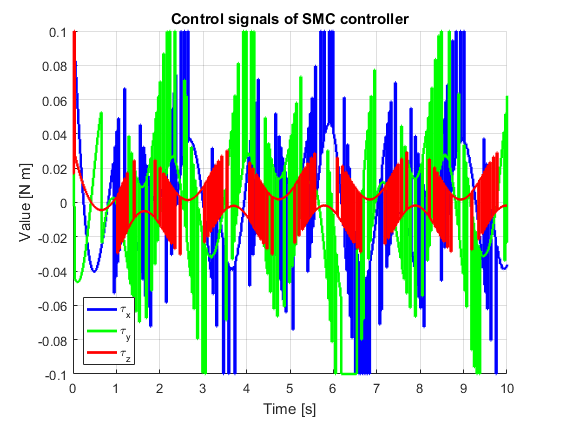}
    \caption{SMC}
    \label{fig:smc2}
    \end{subfigure}
    \begin{subfigure}[b]{0.4\textwidth}
    \centering
    \includegraphics[width=\textwidth]{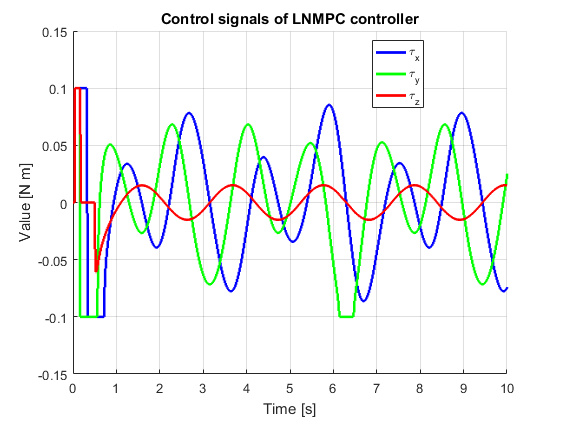}
    \caption{LNMPC}
    \label{fig:lnmpc2}
    \end{subfigure}
    \caption{Control signals of three controllers in Scenario 3}
    \label{fig:control2}
\end{figure}

Figure \ref{fig:tracking2} shows the tracking results. Similar to previous scenarios, the LNMPC introduces the fastest transient responses and smallest steady-state errors. Its control signals also do not exhibit the chattering phenomenon as with the SMC, as shown in Figure \ref{fig:control2}. Especially, the advantage of the LNMPC is clearly shown in period $6<t<7$ s, where the control signals from all controllers are saturated due to the system constraints. The tracking errors of the BSC and SMC for pitch angle $\theta$ quickly increase, whereas that error of the LNMPC remains the same, as can be seen in Figure \ref{fig:trajectories2}. That advantage comes from the capability of the LNMPC to predict its future states and then provide optimal control signals to better adapt to their changes.

\subsection{Scenario 4: Disturbance rejection}
This scenario evaluates the robustness of our controller in the presence of input and output disturbances. The reference signals in this scenario are given by:

\begin{center}
\begin{equation}
\begin{aligned}
    &\phi_{d}=\dfrac{1}{3}\cos\left(2t\right)+\dfrac{1}{2}\cos\left(\dfrac{2}{3}t\right)\\
    &\theta_{d}=\dfrac{1}{2}\cos\left(2t\right)+\dfrac{1}{6}\sin\left(\dfrac{2}{3}t\right)\\
    &\psi_{d}=\dfrac{\pi}{12}t+\cos\left(\dfrac{3}{5}t\right)
\end{aligned}
\end{equation}
\end{center}

\begin{figure}[h!]
\centering
    \begin{subfigure}[b]{0.49\textwidth}
    \centering
    \includegraphics[width=\textwidth]{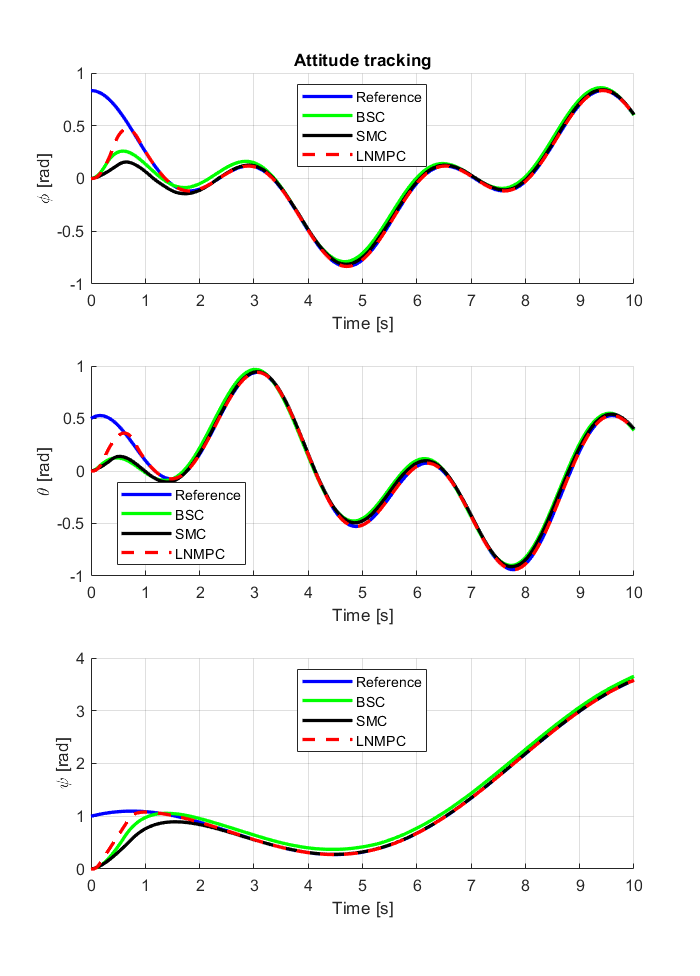}
    \caption{Tracked trajectories}
    \label{fig:trajectories3}
    \end{subfigure}
    \begin{subfigure}[b]{0.49\textwidth}
    \centering
    \includegraphics[width=\textwidth]{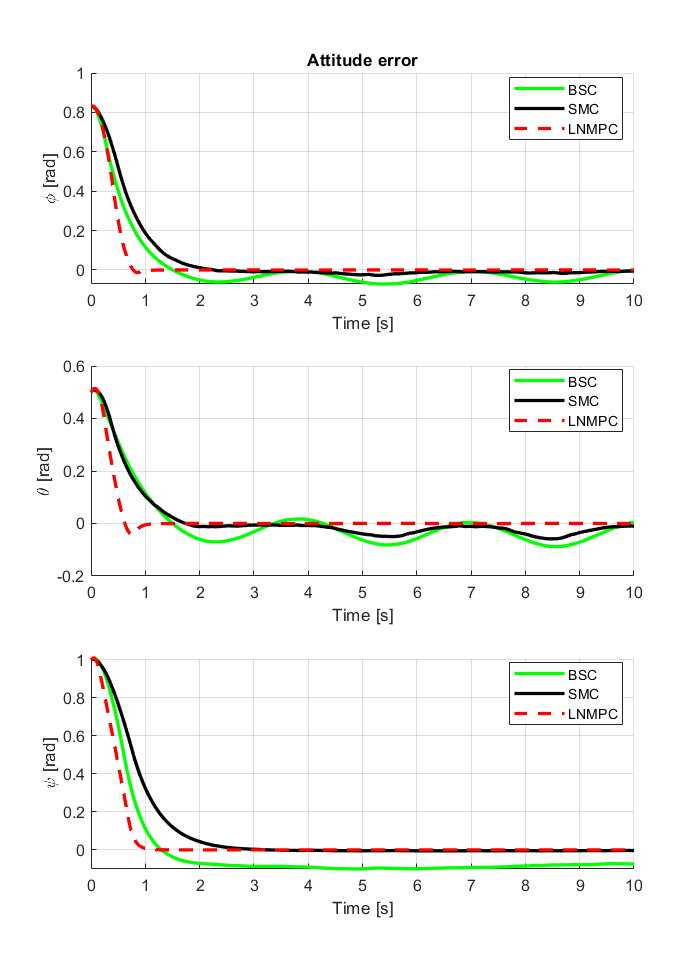}
    \caption{Tracking errors}
    \label{fig:error3}
    \end{subfigure}
    \caption{Tracking trajectories and errors of three controllers in the presence of input disturbances}
    \label{fig:tracking4}
\end{figure}

\begin{figure}[h!]
\centering
    \begin{subfigure}[b]{0.49\textwidth}
    \centering
    \includegraphics[width=\textwidth]{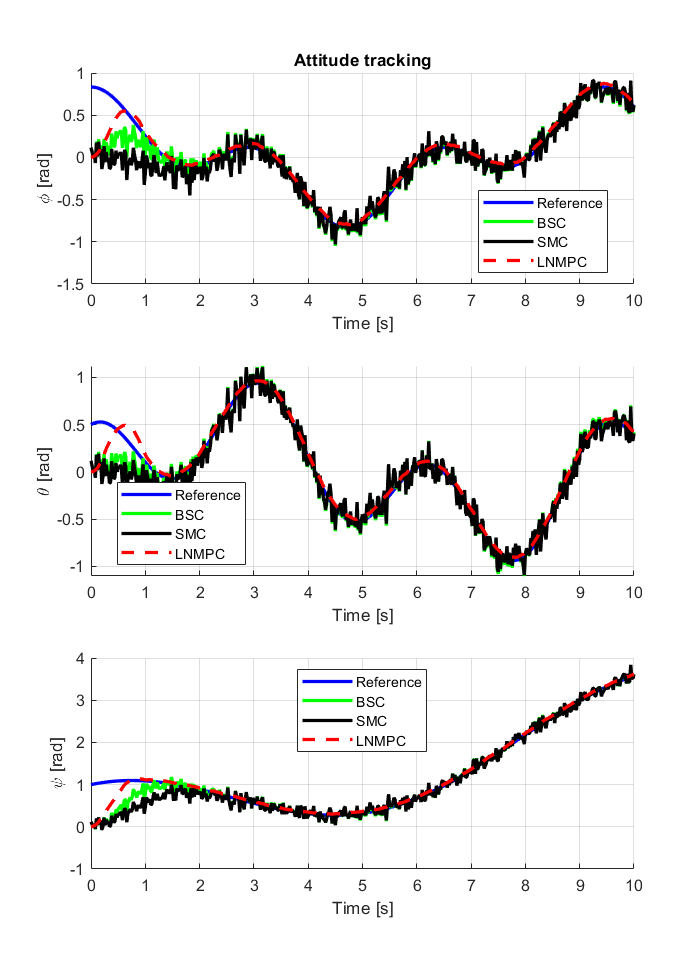}
    \caption{Tracked trajectories}
    \label{fig:trajectories4_1}
    \end{subfigure}
    \begin{subfigure}[b]{0.49\textwidth}
    \centering
    \includegraphics[width=\textwidth]{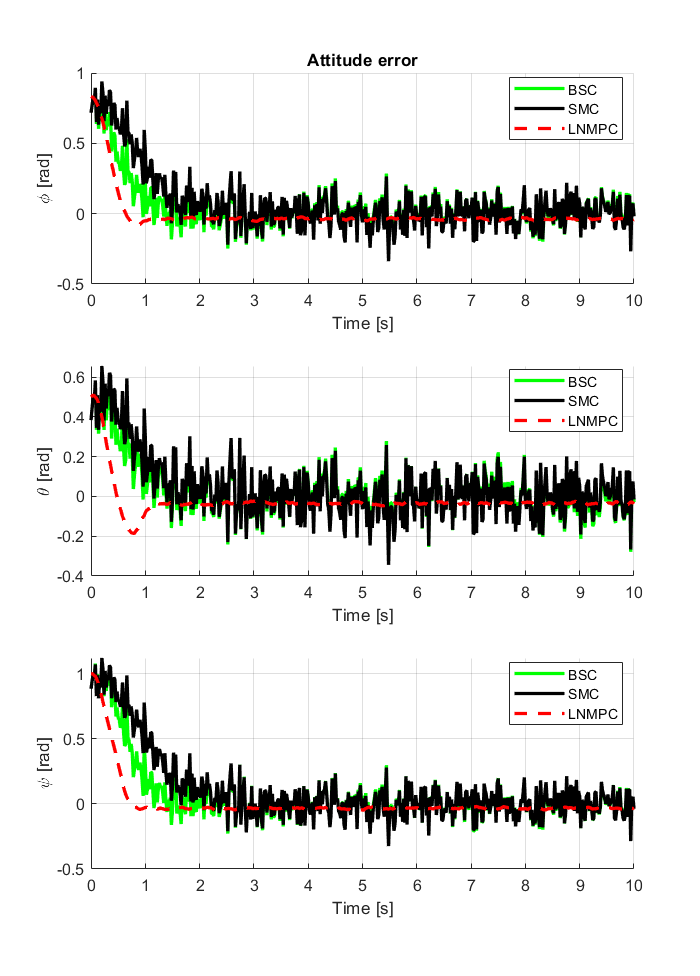}
    \caption{Tracking errors}
    \label{fig:error4_1}
    \end{subfigure}
    \caption{Tracking trajectories and errors of three controllers in the presence of output disturbances}
    \label{fig:tracking4_1}
\end{figure}
The input disturbance is generated from uniformly distributed random noise with magnitude $d_u = 0.1$ Nm added to the control signals. Note that this magnitude is equal to the maximum input torque mentioned in ({\ref{eq:limits}}). The tracking results in Figure {\ref{fig:tracking4}} show that all controllers are rather insensitive to input disturbances. However, the LNMPC performs the best in terms of transient response and steady state error.

For the case of output disturbance, the system states are added with Gaussian noise of zero mean and variance equal to 0.02 rad. The tracking results are depicted in Figure {\ref{fig:tracking4_1}}. It can be seen that the trajectories of BSC and SMC fluctuate about their references implying the influence of noise on those controllers. The LNMPC, on the other hand, copes well with the noise to maintain stable tracking performance. The LNMPC therefore is sufficiently robust to deal with disturbances appearing during UAV operation such as measurement noise and aerodynamic disturbance.

\subsection{Validation with software-in-the-loop tests based on Gazebo simulator}
\begin{figure}[h!]
    \centering
    \includegraphics[width=0.6\textwidth]{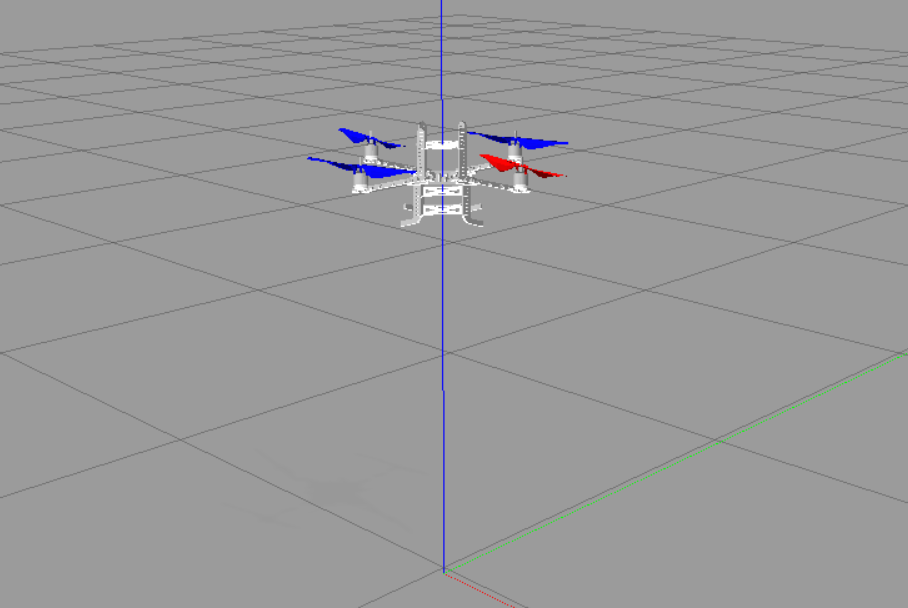}
    \caption{The quadrotor UAV model in the Gazebo simulator}
    \label{fig:gazebo}
\end{figure}

To validate the control performance of the LNMPC on flight tasks, we carry out software-in-the-loop (SIL) tests in which the proposed controller is implemented to control a UAV model created in the Gazebo simulating environment as shown in Figure \ref{fig:gazebo}. Parameters of the UAV are inherited from the AscTec Pelican drone, as shown in Figure \ref{fig:vehicle}. Since the LNMPC only controls the attitude, we use a standard nonlinear MPC to handle the altitude and position tracking. The results are then compared with the built-in PID controller of the UAV. 

The SIL simulation includes two scenarios. One is circular, and the other is a sinusoidal trajectory. The trajectory tracking results are shown in Figure \ref{fig:rviz} where the UAV can reach and track the reference trajectories. The attitude tracking results are shown in Figure \ref{fig:sitl} where it can be seen that the LNMPC outperforms the PID controller in both the response speed and tracking error. This result can be further verified by Table \ref{tbl:rmse} which shows the root-mean-square error (RMSE). The LNMPC introduces smaller errors in all roll, pitch, and yaw angles, implying its better performance compared to the PID controller.

To evaluate the computational practicality of our proposed controller, we have implemented it on an embedded computer named Jetson Nano, a popular embedded
computer used for UAVs \cite{9416140,9359491}. Our implementation is based on the ACADO library {\cite{Houska2011a}}. The average calculation time for the LNMPC in each processing cycle is 0.016834 s, less than the sampling time, 0.02 s. The result thus confirms the validity of our proposed controller for real UAVs.

\begin{figure}[h!]
\centering
    \begin{subfigure}[b]{0.49\textwidth}
    \centering
    \includegraphics[width=\textwidth]{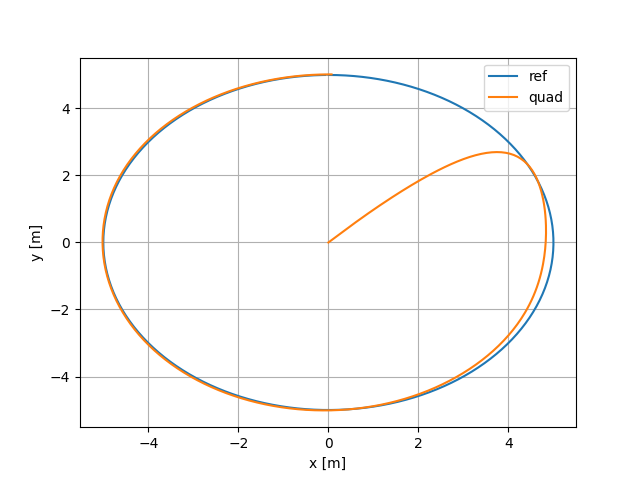}
    \caption{Scenario 1}
    \label{fig:topview1}
    \end{subfigure}
    \begin{subfigure}[b]{0.49\textwidth}
    \centering
    \includegraphics[width=\textwidth]{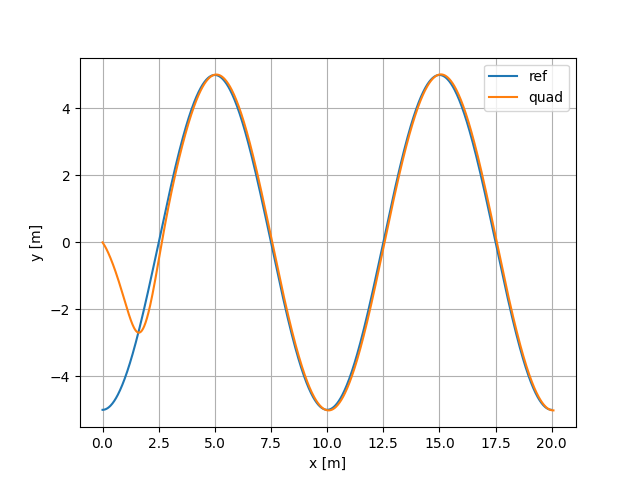}
    \caption{Scenario 2}
    \label{fig:topview2}
    \end{subfigure}
    \caption{Trajectory tracking results of the UAV (Top view)}
    \label{fig:rviz}
\end{figure}

\begin{figure}[h!]
\centering
    \begin{subfigure}[b]{0.49\textwidth}
    \centering
    \includegraphics[width=\textwidth]{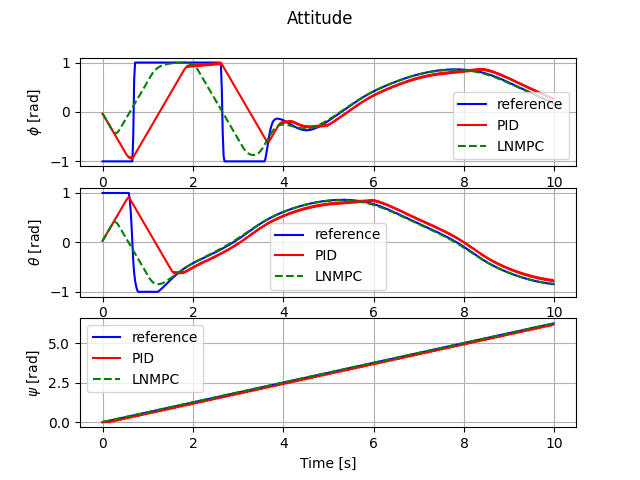}
    \caption{Scenario 1}
    \label{fig:val1}
    \end{subfigure}
    \begin{subfigure}[b]{0.49\textwidth}
    \centering
    \includegraphics[width=\textwidth]{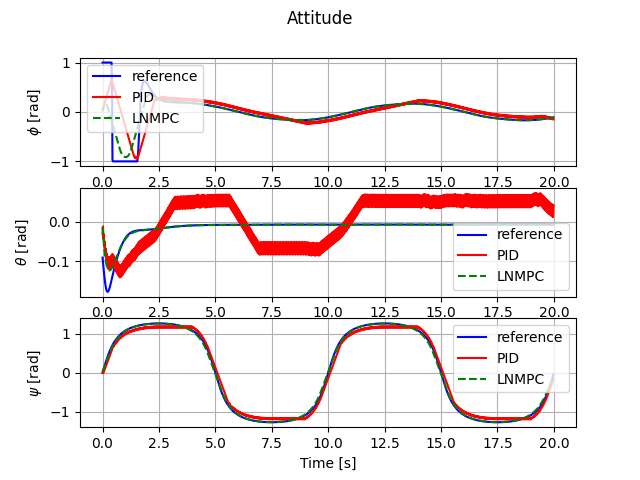}
    \caption{Scenario 2}
    \label{fig:val2}
    \end{subfigure}
    \caption{Attitude tracking results}
    \label{fig:sitl}
\end{figure}


\begin{table}[h!]
\centering
\caption{The attitude RMSE of the controller (rad)}
\label{tbl:rmse}
\begin{tabular}{ p{1.0cm}||p{1.0cm} p{1.0cm} p{1.0cm} | p{1.0cm} p{1.0cm} p{1.0cm}}
& \multicolumn{3}{c|}{Scenario 1}  & \multicolumn{3}{c}{Scenario 2}  \\ \cline{2-7} 
                         & $e_\phi$ & $e_\theta$ & $e_\psi$ & $e_\phi$ & $e_\theta$ & $e_\psi$ \\ \hline
PID                      & 0.5661   & 0.3650     & 0.0842  & 0.2848   & 0.0571     & 0.0937  \\ \hline
LNMPC                    & \textbf{0.3139}   & \textbf{0.2248}     & \textbf{0.0120}  & \textbf{0.1793}   & \textbf{0.0087}     & \textbf{0.0294}  \\ 
\end{tabular}
\end{table}

\section{Conclusion}
\label{SecConclusion}

In this work, we have introduced a new LNMPC designed to track UAVs' attitude trajectories. By using a contraction condition and the design of a SMC, we proved the feasibility and closed-loop stability of the proposed controller. The results show that our controller outperforms other popular nonlinear controllers in response time, tracking accuracy, and control signal smoothness, especially in challenging situations where significant changes in control effort are required. In addition, the results of SIL tests with the Gazebo simulator show the ability to deploy and apply the proposed controller in the embedded control system of UAVs, which then can be used to replace their built-in PID controller. Our future work will extend the LNMPC for the formation control of multiple UAVs.

\section*{Conflict of interest statement}
On behalf of all authors, the corresponding author states that there is no conflict of interest.

\bibliography{mybibfile}


\begin{thebibliography}{41}
\ifx \bisbn   \undefined \def \bisbn  #1{ISBN #1}\fi
\ifx \binits  \undefined \def \binits#1{#1}\fi
\ifx \bauthor  \undefined \def \bauthor#1{#1}\fi
\ifx \batitle  \undefined \def \batitle#1{#1}\fi
\ifx \bjtitle  \undefined \def \bjtitle#1{#1}\fi
\ifx \bvolume  \undefined \def \bvolume#1{\textbf{#1}}\fi
\ifx \byear  \undefined \def \byear#1{#1}\fi
\ifx \bissue  \undefined \def \bissue#1{#1}\fi
\ifx \bfpage  \undefined \def \bfpage#1{#1}\fi
\ifx \blpage  \undefined \def \blpage #1{#1}\fi
\ifx \burl  \undefined \def \burl#1{\textsf{#1}}\fi
\ifx \doiurl  \undefined \def \doiurl#1{\url{https://doi.org/#1}}\fi
\ifx \betal  \undefined \def \betal{\textit{et al.}}\fi
\ifx \binstitute  \undefined \def \binstitute#1{#1}\fi
\ifx \binstitutionaled  \undefined \def \binstitutionaled#1{#1}\fi
\ifx \bctitle  \undefined \def \bctitle#1{#1}\fi
\ifx \beditor  \undefined \def \beditor#1{#1}\fi
\ifx \bpublisher  \undefined \def \bpublisher#1{#1}\fi
\ifx \bbtitle  \undefined \def \bbtitle#1{#1}\fi
\ifx \bedition  \undefined \def \bedition#1{#1}\fi
\ifx \bseriesno  \undefined \def \bseriesno#1{#1}\fi
\ifx \blocation  \undefined \def \blocation#1{#1}\fi
\ifx \bsertitle  \undefined \def \bsertitle#1{#1}\fi
\ifx \bsnm \undefined \def \bsnm#1{#1}\fi
\ifx \bsuffix \undefined \def \bsuffix#1{#1}\fi
\ifx \bparticle \undefined \def \bparticle#1{#1}\fi
\ifx \barticle \undefined \def \barticle#1{#1}\fi
\bibcommenthead
\ifx \bconfdate \undefined \def \bconfdate #1{#1}\fi
\ifx \botherref \undefined \def \botherref #1{#1}\fi
\ifx \url \undefined \def \url#1{\textsf{#1}}\fi
\ifx \bchapter \undefined \def \bchapter#1{#1}\fi
\ifx \bbook \undefined \def \bbook#1{#1}\fi
\ifx \bcomment \undefined \def \bcomment#1{#1}\fi
\ifx \oauthor \undefined \def \oauthor#1{#1}\fi
\ifx \citeauthoryear \undefined \def \citeauthoryear#1{#1}\fi
\ifx \endbibitem  \undefined \def \endbibitem {}\fi
\ifx \bconflocation  \undefined \def \bconflocation#1{#1}\fi
\ifx \arxivurl  \undefined \def \arxivurl#1{\textsf{#1}}\fi
\csname PreBibitemsHook\endcsname

\bibitem{8782102}
\begin{barticle}
\bauthor{\bsnm{Kim}, \binits{J.}},
\bauthor{\bsnm{Kim}, \binits{S.}},
\bauthor{\bsnm{Ju}, \binits{C.}},
\bauthor{\bsnm{Son}, \binits{H.I.}}:
\batitle{Unmanned aerial vehicles in agriculture: A review of perspective of
  platform, control, and applications}.
\bjtitle{IEEE Access}
\bvolume{7},
\bfpage{105100}--\blpage{105115}
(\byear{2019}).
\doiurl{10.1109/ACCESS.2019.2932119}
\end{barticle}
\endbibitem

\bibitem{LIU2020105671}
\begin{barticle}
\bauthor{\bsnm{Liu}, \binits{Y.}},
\bauthor{\bsnm{Liu}, \binits{H.}},
\bauthor{\bsnm{Tian}, \binits{Y.}},
\bauthor{\bsnm{Sun}, \binits{C.}}:
\batitle{Reinforcement learning based two-level control framework of {UAV}
  swarm for cooperative persistent surveillance in an unknown urban area}.
\bjtitle{Aerospace Science and Technology}
\bvolume{98},
\bfpage{105671}
(\byear{2020}).
\doiurl{10.1016/j.ast.2019.105671}
\end{barticle}
\endbibitem

\bibitem{PHUNG201725}
\begin{barticle}
\bauthor{\bsnm{Phung}, \binits{M.D.}},
\bauthor{\bsnm{Quach}, \binits{C.H.}},
\bauthor{\bsnm{Dinh}, \binits{T.H.}},
\bauthor{\bsnm{Ha}, \binits{Q.}}:
\batitle{Enhanced discrete particle swarm optimization path planning for {UAV}
  vision-based surface inspection}.
\bjtitle{Automation in Construction}
\bvolume{81},
\bfpage{25}--\blpage{33}
(\byear{2017}).
\doiurl{10.1016/j.autcon.2017.04.013}
\end{barticle}
\endbibitem

\bibitem{8756125}
\begin{barticle}
\bauthor{\bsnm{Hoang}, \binits{V.T.}},
\bauthor{\bsnm{Phung}, \binits{M.D.}},
\bauthor{\bsnm{Dinh}, \binits{T.H.}},
\bauthor{\bsnm{Ha}, \binits{Q.P.}}:
\batitle{System architecture for real-time surface inspection using multiple
  uavs}.
\bjtitle{IEEE Systems Journal}
\bvolume{14}(\bissue{2}),
\bfpage{2925}--\blpage{2936}
(\byear{2020}).
\doiurl{10.1109/JSYST.2019.2922290}
\end{barticle}
\endbibitem

\bibitem{NAJM20191087}
\begin{barticle}
\bauthor{\bsnm{Najm}, \binits{A.A.}},
\bauthor{\bsnm{Ibraheem}, \binits{I.K.}}:
\batitle{Nonlinear pid controller design for a 6-dof uav quadrotor system}.
\bjtitle{Engineering Science and Technology, an International Journal}
\bvolume{22}(\bissue{4}),
\bfpage{1087}--\blpage{1097}
(\byear{2019}).
\doiurl{10.1016/j.jestch.2019.02.005}
\end{barticle}
\endbibitem

\bibitem{MAHMOODABADI2020105598}
\begin{barticle}
\bauthor{\bsnm{Mahmoodabadi}, \binits{M.J.}},
\bauthor{\bsnm{{Rezaee Babak}}, \binits{N.}}:
\batitle{Robust fuzzy linear quadratic regulator control optimized by
  multi-objective high exploration particle swarm optimization for a 4
  degree-of-freedom quadrotor}.
\bjtitle{Aerospace Science and Technology}
\bvolume{97},
\bfpage{105598}
(\byear{2020}).
\doiurl{10.1016/j.ast.2019.105598}
\end{barticle}
\endbibitem

\bibitem{SIRELKHATEM20226275}
\begin{barticle}
\bauthor{\bsnm{{Sir Elkhatem}}, \binits{A.}},
\bauthor{\bsnm{{Naci Engin}}, \binits{S.}}:
\batitle{Robust {LQR} and {LQR-PI} control strategies based on adaptive
  weighting matrix selection for a {UAV} position and attitude tracking
  control}.
\bjtitle{Alexandria Engineering Journal}
\bvolume{61}(\bissue{8}),
\bfpage{6275}--\blpage{6292}
(\byear{2022}).
\doiurl{10.1016/j.aej.2021.11.057}
\end{barticle}
\endbibitem

\bibitem{8667170}
\begin{bchapter}
\bauthor{\bsnm{Shehzad}, \binits{M.F.}},
\bauthor{\bsnm{Bilal}, \binits{A.}},
\bauthor{\bsnm{Ahmad}, \binits{H.}}:
\bctitle{Position and attitude control of an aerial robot (quadrotor) with
  intelligent {PID} and state feedback lqr controller: A comparative approach}.
In: \bbtitle{2019 16th International Bhurban Conference on Applied Sciences and
  Technology (IBCAST)},
pp. \bfpage{340}--\blpage{346}
(\byear{2019}).
\doiurl{10.1109/IBCAST.2019.8667170}
\end{bchapter}
\endbibitem

\bibitem{8724821}
\begin{bchapter}
\bauthor{\bsnm{Fu}, \binits{Y.}},
\bauthor{\bsnm{Zhao}, \binits{Q.}},
\bauthor{\bsnm{Liu}, \binits{Z.}},
\bauthor{\bsnm{Xu}, \binits{Y.}},
\bauthor{\bsnm{Liu}, \binits{X.}}:
\bctitle{A self-adaptive cascade fuzzy pid method of attitude control for
  quadrotor}.
In: \bbtitle{2018 IEEE 4th International Conference on Control Science and
  Systems Engineering (ICCSSE)},
pp. \bfpage{269}--\blpage{275}
(\byear{2018}).
\doiurl{10.1109/CCSSE.2018.8724821}
\end{bchapter}
\endbibitem

\bibitem{s19010024}
\begin{botherref}
\oauthor{\bsnm{Dong}, \binits{J.}},
\oauthor{\bsnm{He}, \binits{B.}}:
Novel fuzzy pid-type iterative learning control for quadrotor uav.
Sensors
\textbf{19}(1)
(2019).
\doiurl{10.3390/s19010024}
\end{botherref}
\endbibitem

\bibitem{8287250}
\begin{bchapter}
\bauthor{\bsnm{Hoang}, \binits{V.T.}},
\bauthor{\bsnm{Phung}, \binits{M.D.}},
\bauthor{\bsnm{Ha}, \binits{Q.P.}}:
\bctitle{Adaptive twisting sliding mode control for quadrotor unmanned aerial
  vehicles}.
In: \bbtitle{2017 11th Asian Control Conference (ASCC)},
pp. \bfpage{671}--\blpage{676}
(\byear{2017}).
\doiurl{10.1109/ASCC.2017.8287250}
\end{bchapter}
\endbibitem

\bibitem{ALIPOUR201916}
\begin{barticle}
\bauthor{\bsnm{Alipour}, \binits{K.}},
\bauthor{\bsnm{Robat}, \binits{A.B.}},
\bauthor{\bsnm{Tarvirdizadeh}, \binits{B.}}:
\batitle{Dynamics modeling and sliding mode control of tractor-trailer wheeled
  mobile robots subject to wheels slip}.
\bjtitle{Mechanism and Machine Theory}
\bvolume{138},
\bfpage{16}--\blpage{37}
(\byear{2019}).
\doiurl{10.1016/j.mechmachtheory.2019.03.038}
\end{barticle}
\endbibitem

\bibitem{9373410}
\begin{barticle}
\bauthor{\bsnm{Nguyen}, \binits{N.P.}},
\bauthor{\bsnm{Mung}, \binits{N.X.}},
\bauthor{\bsnm{Thanh}, \binits{H.L.N.N.}},
\bauthor{\bsnm{Huynh}, \binits{T.T.}},
\bauthor{\bsnm{Lam}, \binits{N.T.}},
\bauthor{\bsnm{Hong}, \binits{S.K.}}:
\batitle{Adaptive sliding mode control for attitude and altitude system of a
  quadcopter uav via neural network}.
\bjtitle{IEEE Access}
\bvolume{9},
\bfpage{40076}--\blpage{40085}
(\byear{2021}).
\doiurl{10.1109/ACCESS.2021.3064883}
\end{barticle}
\endbibitem

\bibitem{YANG2016208}
\begin{barticle}
\bauthor{\bsnm{Yang}, \binits{Y.}},
\bauthor{\bsnm{Yan}, \binits{Y.}}:
\batitle{Attitude regulation for unmanned quadrotors using adaptive fuzzy
  gain-scheduling sliding mode control}.
\bjtitle{Aerospace Science and Technology}
\bvolume{54},
\bfpage{208}--\blpage{217}
(\byear{2016}).
\doiurl{10.1016/j.ast.2016.04.005}
\end{barticle}
\endbibitem

\bibitem{GONG2019105444}
\begin{barticle}
\bauthor{\bsnm{Gong}, \binits{W.}},
\bauthor{\bsnm{Li}, \binits{B.}},
\bauthor{\bsnm{Yang}, \binits{Y.}},
\bauthor{\bsnm{Ban}, \binits{H.}},
\bauthor{\bsnm{Xiao}, \binits{B.}}:
\batitle{Fixed-time integral-type sliding mode control for the quadrotor uav
  attitude stabilization under actuator failures}.
\bjtitle{Aerospace Science and Technology}
\bvolume{95},
\bfpage{105444}
(\byear{2019}).
\doiurl{10.1016/j.ast.2019.105444}
\end{barticle}
\endbibitem

\bibitem{9410438}
\begin{barticle}
\bauthor{\bsnm{Yogi}, \binits{S.C.}},
\bauthor{\bsnm{Tripathi}, \binits{V.K.}},
\bauthor{\bsnm{Behera}, \binits{L.}}:
\batitle{Adaptive integral sliding mode control using fully connected recurrent
  neural network for position and attitude control of quadrotor}.
\bjtitle{IEEE Transactions on Neural Networks and Learning Systems}
\bvolume{32}(\bissue{12}),
\bfpage{5595}--\blpage{5609}
(\byear{2021}).
\doiurl{10.1109/TNNLS.2021.3071020}
\end{barticle}
\endbibitem

\bibitem{das2009backstepping}
\begin{barticle}
\bauthor{\bsnm{Das}, \binits{A.}},
\bauthor{\bsnm{Lewis}, \binits{F.}},
\bauthor{\bsnm{Subbarao}, \binits{K.}}:
\batitle{Backstepping approach for controlling a quadrotor using lagrange form
  dynamics}.
\bjtitle{Journal of Intelligent and Robotic Systems}
\bvolume{56}(\bissue{1}),
\bfpage{127}--\blpage{151}
(\byear{2009})
\end{barticle}
\endbibitem

\bibitem{8494719}
\begin{barticle}
\bauthor{\bsnm{Jiang}, \binits{T.}},
\bauthor{\bsnm{Lin}, \binits{D.}},
\bauthor{\bsnm{Song}, \binits{T.}}:
\batitle{Finite-time backstepping control for quadrotors with disturbances and
  input constraints}.
\bjtitle{IEEE Access}
\bvolume{6},
\bfpage{62037}--\blpage{62049}
(\byear{2018}).
\doiurl{10.1109/ACCESS.2018.2876558}
\end{barticle}
\endbibitem

\bibitem{glida2020optimal}
\begin{barticle}
\bauthor{\bsnm{Glida}, \binits{H.E.}},
\bauthor{\bsnm{Abdou}, \binits{L.}},
\bauthor{\bsnm{Chelihi}, \binits{A.}},
\bauthor{\bsnm{Sentouh}, \binits{C.}},
\bauthor{\bsnm{Hasseni}, \binits{S.-E.-I.}}:
\batitle{Optimal model-free backstepping control for a quadrotor helicopter}.
\bjtitle{Nonlinear Dynamics}
\bvolume{100}(\bissue{4}),
\bfpage{3449}--\blpage{3468}
(\byear{2020})
\end{barticle}
\endbibitem

\bibitem{FU2018593}
\begin{barticle}
\bauthor{\bsnm{Fu}, \binits{C.}},
\bauthor{\bsnm{Hong}, \binits{W.}},
\bauthor{\bsnm{Lu}, \binits{H.}},
\bauthor{\bsnm{Zhang}, \binits{L.}},
\bauthor{\bsnm{Guo}, \binits{X.}},
\bauthor{\bsnm{Tian}, \binits{Y.}}:
\batitle{Adaptive robust backstepping attitude control for a multi-rotor
  unmanned aerial vehicle with time-varying output constraints}.
\bjtitle{Aerospace Science and Technology}
\bvolume{78},
\bfpage{593}--\blpage{603}
(\byear{2018}).
\doiurl{10.1016/j.ast.2018.05.021}
\end{barticle}
\endbibitem

\bibitem{TUAN2019297}
\begin{barticle}
\bauthor{\bsnm{Tuan}, \binits{L.A.}}:
\batitle{Fractional-order fast terminal back-stepping sliding mode control of
  crawler cranes}.
\bjtitle{Mechanism and Machine Theory}
\bvolume{137},
\bfpage{297}--\blpage{314}
(\byear{2019}).
\doiurl{10.1016/j.mechmachtheory.2019.03.027}
\end{barticle}
\endbibitem

\bibitem{7448915}
\begin{barticle}
\bauthor{\bsnm{Chen}, \binits{F.}},
\bauthor{\bsnm{Jiang}, \binits{R.}},
\bauthor{\bsnm{Zhang}, \binits{K.}},
\bauthor{\bsnm{Jiang}, \binits{B.}},
\bauthor{\bsnm{Tao}, \binits{G.}}:
\batitle{Robust backstepping sliding-mode control and observer-based fault
  estimation for a quadrotor uav}.
\bjtitle{IEEE Transactions on Industrial Electronics}
\bvolume{63}(\bissue{8}),
\bfpage{5044}--\blpage{5056}
(\byear{2016}).
\doiurl{10.1109/TIE.2016.2552151}
\end{barticle}
\endbibitem

\bibitem{Yang2013}
\begin{barticle}
\bauthor{\bsnm{Yang}, \binits{K.}},
\bauthor{\bsnm{Kang}, \binits{Y.}},
\bauthor{\bsnm{Sukkarieh}, \binits{S.}}:
\batitle{Adaptive nonlinear model predictive path-following control for a
  fixed-wing unmanned aerial vehicle}.
\bjtitle{International Journal of Control, Automation and Systems}
\bvolume{11}(\bissue{1}),
\bfpage{65}--\blpage{74}
(\byear{2013}).
\doiurl{10.1007/s12555-012-0028-y}
\end{barticle}
\endbibitem

\bibitem{cavanini2021model}
\begin{barticle}
\bauthor{\bsnm{Cavanini}, \binits{L.}},
\bauthor{\bsnm{Ippoliti}, \binits{G.}},
\bauthor{\bsnm{Camacho}, \binits{E.F.}}:
\batitle{Model predictive control for a linear parameter varying model of an
  uav}.
\bjtitle{Journal of Intelligent \& Robotic Systems}
\bvolume{101}(\bissue{3}),
\bfpage{1}--\blpage{18}
(\byear{2021})
\end{barticle}
\endbibitem

\bibitem{BANGURA201411773}
\begin{barticle}
\bauthor{\bsnm{Bangura}, \binits{M.}},
\bauthor{\bsnm{Mahony}, \binits{R.}}:
\batitle{Real-time model predictive control for quadrotors}.
\bjtitle{IFAC Proceedings Volumes}
\bvolume{47}(\bissue{3}),
\bfpage{11773}--\blpage{11780}
(\byear{2014}).
\doiurl{10.3182/20140824-6-ZA-1003.00203}.
\bcomment{19th IFAC World Congress}
\end{barticle}
\endbibitem

\bibitem{6965772}
\begin{barticle}
\bauthor{\bsnm{Sarunic}, \binits{P.}},
\bauthor{\bsnm{Evans}, \binits{R.}}:
\batitle{Hierarchical model predictive control of uavs performing
  multitarget-multisensor tracking}.
\bjtitle{IEEE Transactions on Aerospace and Electronic Systems}
\bvolume{50}(\bissue{3}),
\bfpage{2253}--\blpage{2268}
(\byear{2014}).
\doiurl{10.1109/TAES.2014.120780}
\end{barticle}
\endbibitem

\bibitem{8375687}
\begin{barticle}
\bauthor{\bsnm{Mammarella}, \binits{M.}},
\bauthor{\bsnm{Capello}, \binits{E.}},
\bauthor{\bsnm{Dabbene}, \binits{F.}},
\bauthor{\bsnm{Guglieri}, \binits{G.}}:
\batitle{Sample-based smpc for tracking control of fixed-wing uav}.
\bjtitle{IEEE Control Systems Letters}
\bvolume{2}(\bissue{4}),
\bfpage{611}--\blpage{616}
(\byear{2018}).
\doiurl{10.1109/LCSYS.2018.2845546}
\end{barticle}
\endbibitem

\bibitem{7272877}
\begin{barticle}
\bauthor{\bsnm{Gavilan}, \binits{F.}},
\bauthor{\bsnm{Vazquez}, \binits{R.}},
\bauthor{\bsnm{Camacho}, \binits{E.F.}}:
\batitle{An iterative model predictive control algorithm for uav guidance}.
\bjtitle{IEEE Transactions on Aerospace and Electronic Systems}
\bvolume{51}(\bissue{3}),
\bfpage{2406}--\blpage{2419}
(\byear{2015}).
\doiurl{10.1109/TAES.2015.140153}
\end{barticle}
\endbibitem

\bibitem{XU2020105686}
\begin{barticle}
\bauthor{\bsnm{Xu}, \binits{Q.}},
\bauthor{\bsnm{Wang}, \binits{Z.}},
\bauthor{\bsnm{Li}, \binits{Y.}}:
\batitle{Fuzzy adaptive nonlinear information fusion model predictive attitude
  control of unmanned rotorcrafts}.
\bjtitle{Aerospace Science and Technology}
\bvolume{98},
\bfpage{105686}
(\byear{2020}).
\doiurl{10.1016/j.ast.2020.105686}
\end{barticle}
\endbibitem

\bibitem{doi:10.2514/6.2017-1512}
\begin{botherref}
\oauthor{\bsnm{Stastny}, \binits{T.J.}},
\oauthor{\bsnm{Dash}, \binits{A.}},
\oauthor{\bsnm{Siegwart}, \binits{R.}}:
Nonlinear MPC for Fixed-wing UAV Trajectory Tracking: Implementation and Flight
  Experiments.
\doiurl{10.2514/6.2017-1512}.
\url{https://arc.aiaa.org/doi/abs/10.2514/6.2017-1512}
\end{botherref}
\endbibitem

\bibitem{Wang2016}
\begin{barticle}
\bauthor{\bsnm{Wang}, \binits{C.}},
\bauthor{\bsnm{Song}, \binits{B.}},
\bauthor{\bsnm{Huang}, \binits{P.}},
\bauthor{\bsnm{Tang}, \binits{C.}}:
\batitle{Trajectory tracking control for quadrotor robot subject to payload
  variation and wind gust disturbance}.
\bjtitle{Journal of Intelligent {\&} Robotic Systems}
\bvolume{83}(\bissue{2}),
\bfpage{315}--\blpage{333}
(\byear{2016}).
\doiurl{10.1007/s10846-016-0333-4}
\end{barticle}
\endbibitem

\bibitem{quirynen2015autogenerating}
\begin{botherref}
\oauthor{\bsnm{Quirynen}, \binits{R.}},
\oauthor{\bsnm{Vukov}, \binits{M.}},
\oauthor{\bsnm{Zanon}, \binits{M.}},
\oauthor{\bsnm{Diehl}, \binits{M.}}:
Autogenerating microsecond solvers for nonlinear mpc: a tutorial using acado
  integrators.
Optimal Control Applications and Methods
\textbf{36}(5)
(2015).
\doiurl{10.1002/oca.2152}
\end{botherref}
\endbibitem

\bibitem{Christofides2011}
\begin{bbook}
\bauthor{\bsnm{Christofides}, \binits{P.D.}},
\bauthor{\bsnm{Liu}, \binits{J.}},
\bauthor{\bparticle{Mu{\~{n}}oz de~la} \bsnm{Pe{\~{n}}a}, \binits{D.}}:
\bbtitle{Lyapunov-Based Model Predictive Control},
pp. \bfpage{13}--\blpage{45}.
\bpublisher{Springer},
\blocation{London}
(\byear{2011}).
\doiurl{10.1007/978-0-85729-582-8_2}
\end{bbook}
\endbibitem

\bibitem{6842371}
\begin{bchapter}
\bauthor{\bsnm{Salazar}, \binits{S.}},
\bauthor{\bsnm{González-Hernández}, \binits{I.}},
\bauthor{\bsnm{Lopez}, \binits{R.}},
\bauthor{\bsnm{Lozano}, \binits{R.}}:
\bctitle{Simulation and robust trajectory-tracking for a quadrotor uav}.
In: \bbtitle{2014 International Conference on Unmanned Aircraft Systems
  (ICUAS)},
pp. \bfpage{1167}--\blpage{1174}
(\byear{2014}).
\doiurl{10.1109/ICUAS.2014.6842371}
\end{bchapter}
\endbibitem

\bibitem{Khalil}
\begin{bbook}
\bauthor{\bsnm{Khalil}, \binits{H.K.}}:
\bbtitle{{Nonlinear Systems; 3rd Ed.}}
\bpublisher{Prentice-Hall},
\blocation{Upper Saddle River, NJ}
(\byear{2002})
\end{bbook}
\endbibitem

\bibitem{Houska2011a}
\begin{barticle}
\bauthor{\bsnm{Houska}, \binits{B.}},
\bauthor{\bsnm{Ferreau}, \binits{H.J.}},
\bauthor{\bsnm{Diehl}, \binits{M.}}:
\batitle{{ACADO} {T}oolkit -- {A}n {O}pen {S}ource {F}ramework for {A}utomatic
  {C}ontrol and {D}ynamic {O}ptimization}.
\bjtitle{Optimal Control Applications and Methods}
\bvolume{32}(\bissue{3}),
\bfpage{298}--\blpage{312}
(\byear{2011})
\end{barticle}
\endbibitem

\bibitem{Andersson2019}
\begin{barticle}
\bauthor{\bsnm{Andersson}, \binits{J.A.E.}},
\bauthor{\bsnm{Gillis}, \binits{J.}},
\bauthor{\bsnm{Horn}, \binits{G.}},
\bauthor{\bsnm{Rawlings}, \binits{J.B.}},
\bauthor{\bsnm{Diehl}, \binits{M.}}:
\batitle{{CasADi} -- {A} software framework for nonlinear optimization and
  optimal control}.
\bjtitle{Mathematical Programming Computation}
\bvolume{11}(\bissue{1}),
\bfpage{1}--\blpage{36}
(\byear{2019}).
\doiurl{10.1007/s12532-018-0139-4}
\end{barticle}
\endbibitem

\bibitem{zanelli2020forces}
\begin{barticle}
\bauthor{\bsnm{Zanelli}, \binits{A.}},
\bauthor{\bsnm{Domahidi}, \binits{A.}},
\bauthor{\bsnm{Jerez}, \binits{J.}},
\bauthor{\bsnm{Morari}, \binits{M.}}:
\batitle{Forces nlp: an efficient implementation of interior-point methods for
  multistage nonlinear nonconvex programs}.
\bjtitle{International Journal of Control}
\bvolume{93}(\bissue{1}),
\bfpage{13}--\blpage{29}
(\byear{2020}).
\doiurl{10.1080/00207179.2017.1316017}
\end{barticle}
\endbibitem

\bibitem{kalmari2015toolkit}
\begin{botherref}
\oauthor{\bsnm{Kalmari}, \binits{J.}},
\oauthor{\bsnm{Backman}, \binits{J.}},
\oauthor{\bsnm{Visala}, \binits{A.}}:
A toolkit for nonlinear model predictive control using gradient projection and
  code generation.
Control Engineering Practice
\textbf{39}
(2015).
\doiurl{10.1016/j.conengprac.2015.01.002}
\end{botherref}
\endbibitem

\bibitem{9416140}
\begin{barticle}
\bauthor{\bsnm{Jeon}, \binits{J.}},
\bauthor{\bsnm{Jung}, \binits{S.}},
\bauthor{\bsnm{Lee}, \binits{E.}},
\bauthor{\bsnm{Choi}, \binits{D.}},
\bauthor{\bsnm{Myung}, \binits{H.}}:
\batitle{Run your visual-inertial odometry on {NVIDIA Jetson}: Benchmark tests
  on a micro aerial vehicle}.
\bjtitle{IEEE Robotics and Automation Letters}
\bvolume{6}(\bissue{3}),
\bfpage{5332}--\blpage{5339}
(\byear{2021}).
\doiurl{10.1109/LRA.2021.3075141}
\end{barticle}
\endbibitem

\bibitem{9359491}
\begin{barticle}
\bauthor{\bsnm{Menshchikov}, \binits{A.}},
\bauthor{\bsnm{Shadrin}, \binits{D.}},
\bauthor{\bsnm{Prutyanov}, \binits{V.}},
\bauthor{\bsnm{Lopatkin}, \binits{D.}},
\bauthor{\bsnm{Sosnin}, \binits{S.}},
\bauthor{\bsnm{Tsykunov}, \binits{E.}},
\bauthor{\bsnm{Iakovlev}, \binits{E.}},
\bauthor{\bsnm{Somov}, \binits{A.}}:
\batitle{Real-time detection of hogweed: {UAV} platform empowered by deep
  learning}.
\bjtitle{IEEE Transactions on Computers}
\bvolume{70}(\bissue{8}),
\bfpage{1175}--\blpage{1188}
(\byear{2021}).
\doiurl{10.1109/TC.2021.3059819}
\end{barticle}
\endbibitem

\end{thebibliography}

\end{document}